\newcommand{\referenza}{}
\newtheorem{thm/}{Theorem}
\newtheorem{prop/}[thm/]{Proposition}
\newtheorem{thm}{Theorem}[section]
\newtheorem*{thm*}{Theorem (\referenza)}
\newtheorem{cor/}[thm/]{Corollary}
\newtheorem{cor}[thm]{Corollary}
\newtheorem*{cor*}{Corollary \referenza}
\newtheorem{lem}[thm]{Lemma}
\newtheorem*{lem*}{Lemma \referenza}
\newtheorem{clm/}{Claim}
\newtheorem{prop}[thm]{Proposition}
\newtheorem*{prop*}{Proposition \referenza}
\newtheorem*{conj*}{Conjecture \referenza}
\newtheorem{rmk}[thm]{Remark}
\newtheorem{que}[thm]{Question}
\newcommand{\N}{\mathbb{N}}
\numberwithin{equation}{section}
\begin{document}

\title[On Abelian and Cyclic Group Codes]{On Abelian and Cyclic Group Codes}

\author{Angelo Marotta}
\address{{Angelo Marotta} - Dipartimento di matematica e applicazioni, Università degli studi di Milano-Bicocca, Via Roberto Cozzi 55, 20125 Milano (Italy)}
\email{a.marotta10@campus.unimib.it}

\begin{abstract}
We determine a condition on the minimum Hamming weight of some special abelian group codes and, as a consequence of this result, we establish that any such code is, up to permutational equivalence, a subspace of the direct sum of $s$ copies of the repetition code of length $t$, for some suitable positive integers $s$ and $t$. Moreover, we provide a complete characterisation of permutation automorphisms of the linear code $C=\bigoplus_{i=1}^{s}Rep_{t}(\mathbb{F}_{q})$ and we establish that such a code is an abelian group code, for every pair of integers $s,t\geq1$. Finally, in a similar fashion as for abelian group codes, we give an equivalent characterisation of cyclic group codes.
\end{abstract}
\maketitle

\noindent{\bf Keywords:} Group codes, Abelian group codes, Cyclic group codes.

\section*{Introduction}
Throughout this paper all groups and fields are supposed to be finite. Let $G$ be a group of order $n$ and $\mathbb{F}_{q}$ be the finite field with $q$ elements, where $q$ is a power of an arbitrary prime number $p$, and denote by $\mathbb{F}_{q}[G]$ the group algebra over $\mathbb{F}_{q}$ associated to the group $G$. In the special case in which $G$ is a cyclic group of order $m$, we will denote it by $\mathcal{C}_{m}$. Furthermore, from now on, we set with $\mathbb{F}_{q}^{n}$ the $n$-dimensional $\mathbb{F}_{q}$-vector space and with $\mathcal{B}=\{e_{1},\ldots,e_{n}\}$ its standard basis.\\
Let $C\leq\mathbb{F}_{q}^{n}$ be a linear code. Then, following \cite{BRS09}, we say that $C$ is a {\it left G-code} (respectively, a {\it right $G$-code}; a {\it $G$-code}) if there exists a bijection $\phi\colon\mathcal{B}\to G$ such that its corresponding $\mathbb{F}_{q}$-linear extension, $\tilde{\phi}\colon\mathbb{F}_{q}^{n}\to\mathbb{F}_{q}[G]$, maps $C$ to a left ideal (respectively, a right ideal; a two-sided ideal) of $\mathbb{F}_{q}[G]$. In this case, we say that the linear code $C$ is {\it permutation equivalent} to the left (respectively; right, two-sided) ideal $\tilde\phi(C)$. In particular, a linear code $C\leq\mathbb{F}_{q}^{n}$ is said to be an {\it abelian group code} (respectively, a {\it cyclic group code}; a {\it metacyclic group code}) if $C$ is $G$-code for some abelian group (respectively, cyclic group; metacyclic group) $G$. More in general, if $\mathcal{G}$ is a class of groups then we say that $C$ is (respectively; a left $\mathcal{G}$-group code, a right $\mathcal{G}$-group code) a $\mathcal{G}$-group code if there exists some group $G\in\mathcal{G}$ such that $C$ is (respectively; a left $G$-code, a right $G$-code) a $G$-code.\\
In \cite{BRS09} Bernal, del Río and Simon gave a necessary and sufficient condition for a linear code to be a (left, right) group code in terms of its permutation automorphisms group and, by virtue of this, they established that any $G$-code is an abelian group code, whenever $G$ has an abelian decomposition. As a consequence of this, they established that any metacyclic group code is an abelian group code \cite[Corollary 3.2]{BRS09}, extending a well known result of Sabin and Lomonaco \cite{SL95} on split-metacyclic group code. By considering the characterisation of group codes provided in \cite{BRS09}, Pillado et al. \cite{PGMMM19} have found an equivalent description of abelian group codes by means of which they established that any group code of dimension at most $3$ can be realised as an abelian group code.\\
In this paper we use the ideas of Pillado et al. to prove our main results.

\begin{thm/}\label{thm1}
Let $C\leq\mathbb{F}_{q}^{n}$ be an abelian group code. Let $G$ be a regular subgroup of $S_{n}$ such that $\langle G,C_{S_{n}}(G)\rangle\leq PAut(C)$ and such that $C$ is permutation equivalent to some ideal of the group algebra $\mathbb{F}_{q}[G]$, on which the derived subgroup $G'$ acts trivially from the left (from the right). If $G'\neq\{1\}$, then $C$ is a $|G'|$-divisible code and $|G'|\mid w(C)$.
\end{thm/}

\begin{cor/}\label{thm2}
Let $C\leq\mathbb{F}_{q}^{n}$ be an abelian group code.  Let $G$ be a regular subgroup of $S_{n}$ such that $\langle G,C_{S_{n}}(G)\rangle\leq PAut(C)$ and  such that $C$ is permutation equivalent to some ideal of the group algebra $\mathbb{F}_{q}[G]$, on which the derived subgroup $G'$ acts trivially from the left (from the right). If $G'\neq\{1\}$, then there exist positive integers $s,t>1$, with $s\cdot t=n$, such that $C$ is permutation equivalent to a linear subspace of $\bigoplus_{i=1}^{s}Rep_{t}(\mathbb{F}_{q})$.
\end{cor/}

Motivated by Corollary~\ref{thm2}, it seems natural to ask whether the extremal codes, that is codes of the form $\bigoplus_{i=1}^{s}Rep_{t}(\mathbb{F}_{q})$, may be realised as abelian group codes. A full answer to this question is given in the following result, which leads us to obtain also new several examples of abelian group codes.

\begin{thm/}\label{thm3}
The linear code $C=\bigoplus_{i=1}^{s}Rep_{t}(\mathbb{F}_{q})$ is an abelian group code, for every $s,t\geq1$.
\end{thm/}

Finally, starting from the equivalent description of abelian group codes provided by Pillado et al. \cite[Corollary 1.4]{PGMMM19}, we establish, in a similar manner, an equivalent characterisation of those linear codes which admit a realisation as cyclic group codes.

\begin{thm/}\label{cor:nsccgc3}
Let $C\leq\mathbb{F}_{q}^{n}$ be a linear code. Then $C$ is a cyclic group code if and only if there exists a regular subgroup $G$ of $S_{n}$ such that $\langle G,C_{S_{n}}(G)\rangle\leq PAut(C)$, $G'$ is an Hall co-cyclic subgroup of $G$ and such that $C$ is permutation equivalent to some ideal of the group algebra $\mathbb{F}_{q}[G]$ on which the derived subgroup $G'$ acts trivially from the left (from the right).
\end{thm/}

This note is organized as follows. In Section~\ref{Section1}, following \cite{BRS09} and \cite{PGMMM19}, we introduce some necessary notation and results about group codes and abelian group codes and we prove Theorem~\ref{thm1} and Corollary~\ref{thm2}. In Section~\ref{Section2} we prove Theorem~\ref{thm3} and hence we give new several constructions of abelian group codes. Finally, in Section~\ref{Section3}, we prove Theorem~\ref{cor:nsccgc3} by using similar arguments as in \cite{PGMMM19}. In addition, we discover a wide class $\mathcal{G}$ of finite groups whose algebraic properties turn every $\mathcal{G}$-group code into a cyclic group code.

\section{Preliminaries}\label{Section1}
Given the finite field $\mathbb{F}_{q}$ and a positive integer $n\in\N$ we denote by $Rep_{n}(\mathbb{F}_{q})$ the repetition code of length $n$ over $\mathbb{F}_{q}$, where $Rep_{n}(\mathbb{F}_{q})=\{(\lambda,\ldots,\lambda)\ |\ \lambda\in\mathbb{F}_{q}\}\leq\mathbb{F}_{q}^{n}$.\\
Furthermore, given a permutation $\sigma$ in the symmetric group $S_{n}$, we denote, with abuse of notation, the $\mathbb{F}_{q}$-isomorphism $\sigma\colon\mathbb{F}_{q}^{n}\to\mathbb{F}_{q}^{n}$ associated to the permutation $\sigma$ such that $\sigma(\sum_{i=1}^{n}a_{i}e_{i})=\sum_{i=1}^{n}a_{i}e_{\sigma(i)}$, for any $\sum_{i=1}^{n}a_{i}e_{i}\in\mathbb{F}_{q}^{n}$. In particular, we say that two linear codes $C,C'\leq\mathbb{F}_{q}^{n}$ are {\it permutation equivalent} if $\sigma(C)=C'$ for some $\sigma\in S_{n}$. Moreover, following the notation used in \cite{BRS09}, for an arbitrary linear code $C\leq\mathbb{F}_{q}^{n}$ we denote by $PAut(C)=\{\sigma\in S_{n}\ |\ \sigma(C)=C\}$ the permutation automorphisms group of $C$.\\
Finally, we recall that a subgroup $G$ of $S_{n}$ is said to be a {\it regular subgroup} if it has order $n$ and its natural action on the set $\{1,\ldots,n\}$ is transitive.

\begin{lem}\label{lem:ai}
Let $H$ be a regular subgroup of $S_{n}$ and fix an arbitrary element $i_{0}\in\{1,\ldots,n\}$. Let $\psi\colon H\to\{1,\ldots,n\}$ be the bijection given by $\psi(h)=h(i_{0})$, for every $h\in H$. Then there is an anti-isomorphism $\sigma\colon H\to C_{S_{n}}(H)$ which maps $h\in H$ to $\sigma_{h}$, where for any $i\in\{1,\ldots,n\}$
\[
\sigma_{h}(i)=\psi^{-1}(i)(h(i_{0}))
\]
Moreover $\sigma_{h}=h$ for every $h\in Z(H)$ and hence $Z(H)=Z(C_{S_{n}})$.
\end{lem}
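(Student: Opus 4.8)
The plan is to build the anti-isomorphism $\sigma$ explicitly using the standard correspondence between a regular permutation group and its centralizer, which is the group of permutations commuting with the given action. Concretely, since $H$ acts regularly on $\{1,\dots,n\}$, the map $\psi\colon H\to\{1,\dots,n\}$, $\psi(h)=h(i_0)$, is a bijection, so we may transport everything to $H$ itself: identifying $\{1,\dots,n\}$ with $H$ via $\psi$, the action of $H$ on the index set becomes the action of $H$ on itself by left multiplication. Under this identification the centralizer $C_{S_n}(H)$ is exactly the group of right multiplications $\rho_h\colon x\mapsto xh$ (for $h\in H$), because an element of $\mathrm{Sym}(H)$ commutes with all left translations if and only if it is a right translation. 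Translating $\rho_h$ back through $\psi$ gives precisely the formula $\sigma_h(i)=\psi^{-1}(i)\bigl(h(i_0)\bigr)$: indeed $\psi^{-1}(i)$ is the unique element of $H$ sending $i_0$ to $i$, and applying it to $h(i_0)=\psi(h)$ realizes right multiplication by $h$ in the index set.

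The key steps, in order, would be: (1) verify that each $\sigma_h$ as defined is a well-defined element of $S_n$ — this is immediate since $\psi^{-1}(i)$ is a single group element acting as a bijection; (2) check the anti-homomorphism property $\sigma_{h_1}\sigma_{h_2}=\sigma_{h_2 h_1}$ by a direct computation unwinding the definition, which mirrors the fact that $\rho_{h_1}\circ\rho_{h_2}=\rho_{h_2 h_1}$ for right translations; (3) show $\sigma$ is injective (hence bijective onto its image, which has order $n$) using that $H$ acts freely, so $\sigma_h=\mathrm{id}$ forces $h=1$; (4) verify $\sigma_h\in C_{S_n}(H)$, i.e. $\sigma_h\circ g = g\circ\sigma_h$ for all $g\in H$, again by unwinding definitions and using that left and right translations of $H$ commute; (5) conclude $\sigma(H)=C_{S_n}(H)$ by a counting/order argument — $C_{S_n}(H)$ of a regular subgroup of $S_n$ has order exactly $n = |H|$, a standard fact (the centralizer of a regular group is regular and has the same order), so the injective homomorphic image $\sigma(H)$, having order $n$, must be all of $C_{S_n}(H)$. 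Finally, for the last assertion: if $h\in Z(H)$ then right multiplication by $h$ coincides with left multiplication by $h$ on $H$, which translates to $\sigma_h = h$ as permutations of $\{1,\dots,n\}$; this yields $Z(H)\subseteq Z(C_{S_n}(H))$, and applying the same construction with the roles of $H$ and $C_{S_n}(H)$ interchanged (using that $C_{S_n}(H)$ is itself regular) gives the reverse inclusion, hence $Z(H)=Z(C_{S_n}(H))$.

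I expect the main obstacle to be bookkeeping in step (5) and the symmetric argument for the center: one must be careful that $C_{S_n}(H)$ is genuinely regular (transitive, because for any $i,j$ there is a right translation carrying one to the other, and of order $n$) so that it is legitimate to apply Lemma 1.1 to it in order to get the matching anti-isomorphism back; and one must make sure the two anti-isomorphisms are mutually inverse up to the identification, which is what forces $Z(H)$ and $Z(C_{S_n}(H))$ to coincide rather than merely being isomorphic. The anti-homomorphism computations in steps (2) and (4) are routine index-chasing once the notation $\psi^{-1}(i)(h(i_0))$ is correctly parsed as "apply the permutation $\psi^{-1}(i)\in H$ to the point $h(i_0)$", so I would present those compactly rather than belaboring them.
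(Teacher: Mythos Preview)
Your proof plan is correct and follows the standard argument via the left/right regular representations; the paper itself does not prove this lemma but simply cites \cite[Lemma~1.1]{BRS09}, whose proof is exactly the approach you outline. One minor simplification for the final assertion: once you know $\sigma\colon H\to C_{S_n}(H)$ is an anti-isomorphism, it automatically carries $Z(H)$ onto $Z(C_{S_n}(H))$, and since you have already shown $\sigma|_{Z(H)}=\mathrm{id}$, the equality $Z(H)=Z(C_{S_n}(H))$ follows immediately without needing to rerun the construction with the roles interchanged.
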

\begin{proof}
See for instance \cite[Lemma 1.1]{BRS09}.
\end{proof}

We are going now to outline a necessary and sufficient condition for a linear code to be a (left) group code.

\begin{thm}[\cite{BRS09}, Theorem 1.2]\label{thm:nscagc}
Let $C\leq\mathbb{F}_{q}^{n}$ be a linear code and $G$ be a finite group of order $n$.
\begin{itemize}
\item[i)]{$C$ is a left $G$-code if and only if $G$ is isomorphic to a regular subgroup of $S_{n}$ contained in $PAut(C)$.}
\item[ii)]{$C$ is a $G$-code if and only if $G$ is isomorphic to a regular subgroup $H$ of $S_{n}$ such that $\langle H,C_{S_{n}}(H)\rangle\leq PAut(C)$.}
\end{itemize}
\end{thm}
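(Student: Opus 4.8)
The plan is to treat the two implications separately, reading \emph{Hall co-cyclic} as the conjunction of two conditions: $G'$ is a Hall subgroup of $G$, i.e. $\gcd(|G'|,[G:G'])=1$, and $G/G'$ is cyclic. Using the anti-isomorphism $G\to C_{S_{n}}(G)$ of Lemma~\ref{lem:ai} to pass between left ideals and right ideals, I will only argue the ``from the left'' case and invoke this symmetry for the other. The forward implication is essentially formal: if $C$ is a cyclic group code then it is a $\mathcal{C}_{n}$-code, so by Theorem~\ref{thm:nscagc}~ii) there is a regular subgroup $H\cong\mathcal{C}_{n}$ of $S_{n}$ with $\langle H,C_{S_{n}}(H)\rangle\leq PAut(C)$ and with $C$ permutation equivalent to an ideal of $\mathbb{F}_{q}[H]$. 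Taking $G:=H$ settles this direction, since $H$ is cyclic, hence abelian, so $G'=\{1\}$; the trivial subgroup is vacuously Hall, the quotient $G/G'\cong\mathcal{C}_{n}$ is cyclic, and $G'$ acts trivially on every ideal.

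The substance is the converse. Assume $G$ as in the statement and set $t:=|G'|$ and $s:=[G:G']$, so that $st=n$ and, by the Hall hypothesis, $\gcd(s,t)=1$. Fixing a bijection $\phi\colon\mathcal{B}\to G$ realizing the permutation equivalence, I would first analyse the ideal $I=\tilde{\phi}(C)\leq\mathbb{F}_{q}[G]$ on which $G'$ acts trivially from the left. Writing an element of $I$ as $\sum_{g}a_{g}g$, the condition $h\cdot x=x$ for all $h\in G'$ forces the coefficient function $a$ to be constant on the left cosets of $G'$; equivalently, $I$ is spanned by the coset sums $\widehat{G'}\,g$, where $\widehat{G'}=\sum_{h\in G'}h$. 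This is exactly the block structure underlying Corollary~\ref{thm2}: the code $C$ is permutation equivalent to the set of vectors that are constant on the $s$ blocks cut out by the cosets of $G'$, each of length $t$. The $\mathbb{F}_{q}$-linear map $\Phi\colon\widehat{G'}\,g\mapsto \bar g$ carries $I$ isomorphically onto a subspace $J\leq\mathbb{F}_{q}[G/G']$, and since $G'\trianglelefteq G$ gives $k\,\widehat{G'}=\widehat{G'}\,k$ for every $k\in G$, one checks that $\Phi(k\cdot x)=\bar k\,\Phi(x)$, so that $J$ is a left ideal of $\mathbb{F}_{q}[G/G']$. As $G/G'$ is cyclic of order $s$, the algebra $\mathbb{F}_{q}[G/G']$ is commutative and $J$ is a genuine cyclic code of length $s$.

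It then remains to reassemble $C$ inside $\mathbb{F}_{q}[\mathcal{C}_{n}]$. Because $\gcd(s,t)=1$, the Chinese Remainder Theorem gives $\mathcal{C}_{n}\cong\mathcal{C}_{s}\times\mathcal{C}_{t}$ and hence $\mathbb{F}_{q}[\mathcal{C}_{n}]\cong\mathbb{F}_{q}[\mathcal{C}_{s}]\otimes_{\mathbb{F}_{q}}\mathbb{F}_{q}[\mathcal{C}_{t}]$. I would fix a group isomorphism $G/G'\cong\mathcal{C}_{s}$, which carries $J$ to a cyclic code of length $s$, together with an arbitrary bijection of each $G'$-coset onto $\mathcal{C}_{t}$; composing with $\phi$ yields a bijection $\mathcal{B}\to\mathcal{C}_{s}\times\mathcal{C}_{t}\cong\mathcal{C}_{n}$. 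Under this identification the block-constant code above becomes $J\otimes\mathbb{F}_{q}\,\widehat{\mathcal{C}_{t}}$, where $\mathbb{F}_{q}\,\widehat{\mathcal{C}_{t}}=Rep_{t}(\mathbb{F}_{q})$ is the one-dimensional ideal of $\mathbb{F}_{q}[\mathcal{C}_{t}]$ spanned by $\widehat{\mathcal{C}_{t}}=\sum_{c\in\mathcal{C}_{t}}c$. Being the tensor product of an ideal of $\mathbb{F}_{q}[\mathcal{C}_{s}]$ with an ideal of $\mathbb{F}_{q}[\mathcal{C}_{t}]$, this is an ideal of $\mathbb{F}_{q}[\mathcal{C}_{n}]$; thus $C$ is permutation equivalent to an ideal of $\mathbb{F}_{q}[\mathcal{C}_{n}]$ and so, by definition, is a $\mathcal{C}_{n}$-code, hence a cyclic group code.

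The step I expect to be the main obstacle is the last one: keeping careful track of the three bijections (the original $\phi$, the isomorphism $G/G'\cong\mathcal{C}_{s}$ on the blocks, and the within-block identifications with $\mathcal{C}_{t}$) and verifying that their composite transports the block-constant code to $J\otimes\mathbb{F}_{q}\,\widehat{\mathcal{C}_{t}}$ \emph{as an ideal}, not merely as a subspace. It is precisely here that both hypotheses enter in an essential way: the co-cyclic condition guarantees that the block pattern $J$ is a cyclic code rather than only an abelian one, while the Hall condition $\gcd(s,t)=1$ is exactly what makes $\mathcal{C}_{s}\times\mathcal{C}_{t}$ cyclic, so that the reassembled code lives in $\mathbb{F}_{q}[\mathcal{C}_{n}]$ and not merely in the group algebra of a non-cyclic abelian group.
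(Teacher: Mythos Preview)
Your proposal does not address the stated theorem. The statement you were asked to prove is Theorem~\ref{thm:nscagc}, the Bernal--del R\'io--Sim\'on characterisation of (left) $G$-codes via regular subgroups of $PAut(C)$; the paper does not prove this result at all but simply quotes it from~\cite{BRS09}. What you have written is instead an argument for Theorem~\ref{cor:nsccgc3}, the characterisation of cyclic group codes via Hall co-cyclic derived subgroups. This is clear from your opening paragraph, where you unpack the phrase ``Hall co-cyclic'' and then invoke Theorem~\ref{thm:nscagc}~ii) as a known black box---i.e.\ you are \emph{using} the very statement you were supposed to prove.

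If your actual target was Theorem~\ref{cor:nsccgc3}, then your argument is sound but takes a different route from the paper. The paper's proof is short and abstract: it applies Lemma~\ref{lem:Hnccspei}, which passes to the auxiliary group $H=(G/G')\times\mathcal{C}_{|G'|}$, invokes Lemma~\ref{lem:p1} to transport the ideal to $\mathbb{F}_{q}[H]$, and then appeals to Theorem~\ref{thm:ccdGccgc} (the coprime-cyclic-decomposition criterion) to conclude. Your approach is more hands-on: you read off the block-constant structure directly, push the ideal down to $J\leq\mathbb{F}_{q}[G/G']\cong\mathbb{F}_{q}[\mathcal{C}_{s}]$, and then rebuild the code as $J\otimes\mathbb{F}_{q}\widehat{\mathcal{C}_{t}}$ inside $\mathbb{F}_{q}[\mathcal{C}_{s}]\otimes\mathbb{F}_{q}[\mathcal{C}_{t}]\cong\mathbb{F}_{q}[\mathcal{C}_{n}]$. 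Both arguments exploit the hypotheses identically (co-cyclic makes $J$ a cyclic code; Hall makes $\mathcal{C}_{s}\times\mathcal{C}_{t}$ cyclic), and the bookkeeping worry you flag in your last paragraph is real but routine. The paper's route is cleaner because Lemma~\ref{lem:p1} hides exactly that bookkeeping; yours has the advantage of making the structure of the resulting cyclic code explicit.
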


\begin{cor}[\cite{BRS09}, Corollary 1.3]\label{cor:nscgc}
Let $C\leq\mathbb{F}_{q}^{n}$ be a linear code and $\mathcal{G}$ be a class of groups.
\begin{itemize}
\item[i)]{$C$ is a left $\mathcal{G}$-code if and only if PAut($C$) contains a regular subgroup of $S_{n}$ in $\mathcal{G}$.}
\item[ii)]{$C$ is a $\mathcal{G}$-code if and only if PAut($C$) contains a regular subgroup $H$ of $S_{n}$ in $\mathcal{G}$ such that $C_{S_{n}}(H)\leq$ PAut($C$).}
\end{itemize}
\end{cor}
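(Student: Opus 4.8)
The plan is to deduce both equivalences directly from Theorem~\ref{thm:nscagc} by quantifying the single-group criterion over the class $\mathcal{G}$, so that the only real work is bookkeeping about isomorphism closure and generated subgroups.

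For part (i), first I would unfold the definition: $C$ is a left $\mathcal{G}$-code precisely when $C$ is a left $G$-code for some $G\in\mathcal{G}$. Given such a $G$, Theorem~\ref{thm:nscagc}(i) produces a regular subgroup $H\leq S_{n}$ with $H\cong G$ and $H\leq PAut(C)$; since $\mathcal{G}$ is a class of groups, hence closed under isomorphism, we have $H\in\mathcal{G}$, and this yields the forward implication. Conversely, if $PAut(C)$ already contains a regular subgroup $H\in\mathcal{G}$, then applying Theorem~\ref{thm:nscagc}(i) with $G:=H$---which is trivially isomorphic to the regular subgroup $H\leq PAut(C)$---shows that $C$ is a left $H$-code, and therefore a left $\mathcal{G}$-code.

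For part (ii), the argument runs in parallel, now invoking Theorem~\ref{thm:nscagc}(ii). The one extra observation I would record is that, because $PAut(C)$ is a subgroup of $S_{n}$, the containment $\langle H,C_{S_{n}}(H)\rangle\leq PAut(C)$ holds if and only if both $H\leq PAut(C)$ and $C_{S_{n}}(H)\leq PAut(C)$: a subgroup contains the subgroup generated by a family of subsets exactly when it contains each of them. With this translation, the hypothesis of Theorem~\ref{thm:nscagc}(ii) matches verbatim the condition that $PAut(C)$ contain a regular subgroup $H\in\mathcal{G}$ with $C_{S_{n}}(H)\leq PAut(C)$, and the forward and converse implications follow exactly as in part (i), using closure of $\mathcal{G}$ under isomorphism in the forward direction and the choice $G:=H$ in the converse.

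There is no substantive obstacle here: the statement is a formal consequence of the single-group characterisation. The only points requiring care are the implicit convention that a \emph{class of groups} is closed under isomorphism---needed so that passing from an abstract $G\in\mathcal{G}$ to the concrete regular subgroup $H\cong G$ furnished by Theorem~\ref{thm:nscagc} keeps us inside $\mathcal{G}$---together with the elementary equivalence between containing a generated subgroup and containing its generators.
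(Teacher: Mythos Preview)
Your proposal is correct. The paper does not actually supply its own proof of this corollary---it is simply quoted from \cite{BRS09} without argument---so there is nothing to compare against; your derivation from Theorem~\ref{thm:nscagc}, together with the remark on isomorphism closure of a class and the equivalence $\langle H,C_{S_{n}}(H)\rangle\leq PAut(C)\iff H\leq PAut(C)$ and $C_{S_{n}}(H)\leq PAut(C)$, is exactly the intended one-line deduction.
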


As a direct consequence of Corollary~\ref{cor:nscgc}, Bernal, del Río and Simon have proved \cite[Theorem 3.1]{BRS09} that for any finite group $G$, having an abelian decomposition, every $G$-code is an abelian group code. We recall in particular that a group $G$, not necessary finite, has an {\it abelian decomposition} if there exist abelian subgroups $A$ and $B$ of $G$ such that $G=AB=\{ab\ |\ a\in A,b\in B\}$.\\
We get the following Theorem due to Bernal, del Río and Simon \cite{BRS09}, which will be a powerful and essential  preliminary for the following results of this section. It should be noted that we are going to show the proof of the following Theorem~\ref{thm:bdsad} because proof of Theorem~\ref{thm:ccdGccgc} will use quite similar arguments and techniques. 

\begin{thm}[\cite{BRS09}, Theorem 3.1]\label{thm:bdsad}
Let $C\le\mathbb{F}_{q}^{n}$ be a linear code. Suppose there exists a regular subgroup $G$ of $S_{n}$ having an abelian decomposition and such that $\langle G,C_{S_{n}}(G)\rangle\leq PAut(C)$. Then, $C$ is an abelian group code.
\end{thm}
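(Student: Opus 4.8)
The plan is to reduce Theorem~\ref{thm:bdsad} to part (ii) of Theorem~\ref{thm:nscagc} by producing, from the abelian decomposition $G=AB$, a regular \emph{abelian} subgroup $H$ of $S_n$ whose centraliser we can control. First I would invoke Lemma~\ref{lem:ai}: fixing a point $i_0\in\{1,\dots,n\}$ and the associated bijection $\psi\colon G\to\{1,\dots,n\}$, $h\mapsto h(i_0)$, we obtain the anti-isomorphism $G\to C_{S_n}(G)$, $h\mapsto\sigma_h$. The key observation is that since $A$ is abelian, the set $\{\sigma_a : a\in A\}$ is an abelian subgroup of $C_{S_n}(G)$ which is \emph{anti-isomorphic} (hence isomorphic) to $A$; likewise $B$ itself is an abelian subgroup of $G$. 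I would then set
\[
H=\langle\, B,\ \{\sigma_a : a\in A\}\,\rangle\le S_n .
\]

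Next I would verify that $H$ is abelian and regular. Commutativity: elements of $B$ commute among themselves ($B$ abelian), the $\sigma_a$ commute among themselves ($A$ abelian, anti-isomorphism), and each $\sigma_a$ commutes with every element of $G\supseteq B$ because $\sigma_a\in C_{S_n}(G)$; hence $H$ is abelian. For regularity I would check transitivity on $\{1,\dots,n\}$: given $j\in\{1,\dots,n\}$ write $\psi^{-1}(j)=g\in G$ and, using $G=AB$, write $g=ab$ with $a\in A$, $b\in B$; a short computation with the formula $\sigma_a(i)=\psi^{-1}(i)(a(i_0))$ shows that $b$ followed by $\sigma_a$ (suitably ordered) sends $i_0$ to $g(i_0)=j$, so $H$ acts transitively. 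Transitivity of an abelian group forces the action to be regular, so $|H|=n$ and $H$ is a regular abelian subgroup of $S_n$.

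It then remains to show $\langle H,C_{S_n}(H)\rangle\le PAut(C)$, and this is the step I expect to be the main obstacle, since we only know $\langle G,C_{S_n}(G)\rangle\le PAut(C)$. The point is that $H\le\langle G,C_{S_n}(G)\rangle\le PAut(C)$ already, because $B\le G$ and each $\sigma_a\in C_{S_n}(G)$. For the centraliser of $H$, I would use the structural fact from Lemma~\ref{lem:ai} that for a regular subgroup the centraliser in $S_n$ is again regular of order $n$ and, crucially, that $C_{S_n}(H)$ is generated by the ``$\sigma$-image'' of $H$: applying Lemma~\ref{lem:ai} to the regular group $H$ (with the same base point $i_0$, noting $\psi$ restricts appropriately) gives $C_{S_n}(H)=\{\tau_h : h\in H\}$ where $\tau$ is the associated anti-isomorphism. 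One then checks that for $h\in B$ the element $\tau_h$ coincides with $\sigma_h\in C_{S_n}(G)\le PAut(C)$, and for $h=\sigma_a$ the element $\tau_{\sigma_a}$ coincides with $a\in A\le G\le PAut(C)$; since $PAut(C)$ is a group, it contains the subgroup generated by all these, namely $\langle H,C_{S_n}(H)\rangle$. By Theorem~\ref{thm:nscagc}(ii) applied to the regular abelian group $H$, the code $C$ is an $H$-code, hence an abelian group code. I would present the base-point bookkeeping and the identification $C_{S_n}(H)=\langle A,\sigma(B)\rangle$ carefully, as that is where the argument is most delicate; everything else is a direct verification.
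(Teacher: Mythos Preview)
Your strategy is essentially the paper's: form the abelian group generated by one factor of the decomposition together with the $\sigma$-image of the other, check it is regular, and invoke Theorem~\ref{thm:nscagc}. The paper takes $K=\langle A,\sigma(B)\rangle$; you take $\langle B,\sigma(A)\rangle$, which is the same construction with the roles of $A$ and $B$ interchanged. For regularity the paper computes $|K|=n$ by an index argument and then checks that point stabilisers are trivial, whereas you argue transitivity directly and use that an abelian transitive group is automatically regular; both routes are fine. (In your transitivity computation note that $\sigma_a b$ sends $i_0$ to $ba(i_0)$, so you should write $g=b'a'$ using $G=AB=BA$ rather than $g=ab$.)

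Where your write-up goes astray is the centraliser step, which you flag as ``the main obstacle'' and then try to resolve by the identities $\tau_b=\sigma_b$ and $\tau_{\sigma_a}=a$. These identities are false in general: by Lemma~\ref{lem:ai} applied to $G$ one has $\sigma_b=b$ only when $b\in Z(G)$, so $\tau_b$ (which equals $b$, see below) need not equal $\sigma_b$. The point is that no such computation is needed. Once you know $H$ is abelian and regular, the ``Moreover'' clause of Lemma~\ref{lem:ai} applied to $H$ itself gives $\tau_h=h$ for every $h\in Z(H)=H$, hence $C_{S_n}(H)=H$. Thus $\langle H,C_{S_n}(H)\rangle=H\le\langle G,C_{S_n}(G)\rangle\le PAut(C)$ is immediate. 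The paper handles this implicitly in the same way: it never discusses $C_{S_n}(K)$ because $K$ abelian regular already forces $C_{S_n}(K)=K$. So your argument is correct once you replace the erroneous centraliser computation by this one-line observation.
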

\begin{proof}
Let $A,B\leq G$ be abelian subgroups such that $G=AB$ and let $\sigma\colon G\to C_{S_{n}}(G)$ be the anti-isomorphism as in Lemma~\ref{lem:ai}, with respect to some $i_{0}\in\{1,\ldots,n\}$, and set $A_{1}=\sigma(A)$ and $B_{1}=\sigma(B)$.\\
Considering the group $K=\langle A,B_{1}\rangle\leq\langle G,C_{S_{n}}(G)\rangle\leq$ PAut($C$), we can observe that $K$ is an abelian group, as $A$ and $B_{1}$ are abelian subgroups such that $A\leq G$ and $B_{1}\leq C_{S_{n}}(G)$. In order to show that $C$ is a $K$-code, and therefore an abelian group code, it is enough to prove, by Theorem~\ref{thm:nscagc}, that $|K|=n$ and the stabilizier in $K$ of every element of $\{1,\ldots,n\}$ is trivial. \\
To prove the first condition, by Lemma~\ref{lem:ai} we have $B\cap Z(G)=\sigma(B\cap Z(G))=B_{1}\cap Z(C_{S_{n}}(G))$, thus $[B:B\cap Z(G)]=[B_{1}:B_{1}\cap Z(C_{S_{n}}(G))]$ and $[B\cap Z(G):B\cap A]=[B_{1}\cap Z(C_{S_{n}}(G)):B_{1}\cap A]$. It implies that $[G:A]=[AB:A]=[B:B\cap A]=[B:B\cap Z(G)][B\cap Z(G):A]=[B_{1}:B_{1}\cap Z(C_{S_{n}}(G))][B_{1}\cap Z(C_{S_{n}}(G)):B_{1}\cap A]=[B_{1}:B_{1}\cap A]=[AB_{1}:A]=[K:A]$, and so $|K|=|G|=n$.\\
Finally, let $i\in\{1,\ldots,n\}$ and $k\in K$ such that $k(i)=i$. Then $k=a\beta$ for some $a\in A$ and $\beta=\sigma_{b}\in B_{1}$, with $b\in B$. Furthermore, following the notation used in Lemma~\ref{lem:ai} and being $G$ a regular subgroup of $S_{n}$, then there exists a unique $g\in G$ such that $g(i_{0})=i$. Therefore $g(i_{0})=i=k(i)=a\beta g(i_{0})=a\sigma_{b}(\psi(g))=abg(i_{0})$, thus $agb=g$. Setting $g=a'b'$, with $a'\in A$ and $b'\in B$, then $a'abb'=aa'b'b=agb=g=a'b'$ and hence  $ab=1_{G}$. We conclude observing that, being $b^{-1}=a\in A\cap B\leq Z(G)$, then $\beta=\sigma_{b}=b$ and hence $k=a\beta=ab=1_{G}$.
\end{proof}

Following the terminology used in \cite{PGMMM19}, given two groups $G$ and $H$, we say that the left (right, two-sided) ideals $I$ and $J$ of the group algebras $\mathbb{F}_{q}[G]$ and $\mathbb{F}_{q}[H]$, respectively, are {\it permutation equivalent} if there exists a bijection $\varphi\colon G\to H$ such that $\tilde\varphi(I)=J$, where $\tilde\varphi$ denotes the $\mathbb{F}_{q}$-linear extension of $\varphi$ to the corresponding group algebras. It should be observed that this $\mathbb{F}_{q}$-linear extension $\tilde{\varphi}$ is not necessary an automorphism of $\mathbb{F}_{q}$-algebras because bijection $\varphi$ is not required to be an isomorphism of groups. Furthermore, given a linear code $C\leq\mathbb{F}_{q}^{n}$ and a finite group $G$ of order $n$, we remember that the linear code $C$ is {\it permutation equivalent} to some (left, right, two-sided) ideal $I$ of the group algebra $\mathbb{F}_{q}[G]$ if there exists a bijection $\phi\colon\mathcal{B}\to G$ such that $\tilde{\phi}(C)=I$.\\
Finally, we say that a subgroup $U$ of $G$ {\it acts trivially} from the left (from the right) on some subset $X\subseteq\mathbb{F}_{q}[G]$ if $u\cdot x=x$ (respectively, $x\cdot u=x$) for every $u\in U$ and $x\in X$.\\
We get the following results due to Pillado et al., \cite{PGMMM19}.

\begin{lem}[\cite{PGMMM19}, Lemma 1.2]\label{lem:p1}
Let $G$ and $H$ two groups of the same order $n$. Suppose that there exist two normal subgroups $N\unlhd G$ and $K\unlhd H$ such that $G/N\cong H/K$. If $N$ acts trivially on some (left, right, two-sided) ideal $I$ of $\mathbb{F}_{q}[G]$, then $I$ is permutation equivalent to some (respectively; left, right, two-sided) ideal of the group algebra $\mathbb{F}_{q}[H]$.
\end{lem}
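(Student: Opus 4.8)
The plan is to realise the permutation equivalence by passing through the quotient group algebras. I state everything for the left-ideal case; the right and two-sided cases are entirely symmetric (using right $N$-invariants, resp. the fact that a two-sided ideal contained in the invariant space is a sub-bimodule). So suppose $N$ acts trivially from the left on the left ideal $I$, i.e. $u\cdot x=x$, equivalently $(u-1)x=0$, for all $u\in N$ and $x\in I$. The first, elementary step is to record the consequence: writing $x=\sum_{g\in G}a_{g}\,g$, the relations $ux=x$ for all $u\in N$ force the coefficient function to satisfy $a_{vg}=a_{g}$ for every $v\in N$, so $a$ is constant on the right cosets $Ng$. Hence $I$ lies in the space $\mathbb{F}_{q}[G]^{N}=\{x:ux=x\ \forall u\in N\}$ of left $N$-invariants, which has $\mathbb{F}_{q}$-basis $\{\widehat{N}g\}$ as $g$ runs over a transversal of $N\backslash G$, where $\widehat{N}=\sum_{v\in N}v$.

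The structurally decisive step uses the normality of $N$. Since $N\unlhd G$ one has $g\widehat{N}=\widehat{N}g$, from which I would verify that $\mathbb{F}_{q}[G]^{N}$ is in fact a two-sided ideal of $\mathbb{F}_{q}[G]$ and that the $\mathbb{F}_{q}$-linear map $\widehat{N}g\mapsto Ng$ is an isomorphism of $\mathbb{F}_{q}[G]$-bimodules $\mathbb{F}_{q}[G]^{N}\xrightarrow{\ \sim\ }\mathbb{F}_{q}[G/N]$, where $\mathbb{F}_{q}[G/N]$ is made into a bimodule by pulling both actions back along the projection $\pi\colon\mathbb{F}_{q}[G]\to\mathbb{F}_{q}[G/N]$; indeed $h\cdot\widehat{N}g=\widehat{N}hg\mapsto Nhg=\pi(h)(Ng)$, and similarly on the right. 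Because $\pi$ is surjective, the $\mathbb{F}_{q}[G]$-sub-bimodules (resp. left, right submodules) of $\mathbb{F}_{q}[G/N]$ are exactly its two-sided (resp. left, right) ideals. Consequently $I$, a left ideal contained in $\mathbb{F}_{q}[G]^{N}$, corresponds under this isomorphism to a left ideal $\bar{I}$ of $\mathbb{F}_{q}[G/N]$, and the correspondence is type-preserving in the other two cases.

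Next I would transport $\bar{I}$ across the quotients. The hypothesis gives a group isomorphism $\theta\colon G/N\to H/K$, which extends to an algebra isomorphism $\widetilde{\theta}\colon\mathbb{F}_{q}[G/N]\to\mathbb{F}_{q}[H/K]$ carrying $\bar{I}$ to an ideal $\bar{J}$ of the same type. Running the first two steps in reverse for the pair $(H,K)$ — noting that $|K|=|N|$, since $|G/N|=|H/K|$ and $|G|=|H|=n$ — produces a left (resp. right, two-sided) ideal $J$ of $\mathbb{F}_{q}[H]$, contained in $\mathbb{F}_{q}[H]^{K}$, that corresponds to $\bar{J}$ and on which $K$ acts trivially by construction. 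This $J$ is the target ideal.

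It remains to exhibit a set bijection $\varphi\colon G\to H$ with $\widetilde{\varphi}(I)=J$. I would define $\varphi$ by matching cosets: for each right coset $Ng$, the map $\theta$ prescribes a target coset $\theta(Ng)=Kh$ of equal cardinality $|N|=|K|$, and I let $\varphi$ restrict to an \emph{arbitrary} bijection $Ng\to\theta(Ng)$; gluing these over all cosets gives a bijection $G\to H$. The point requiring the most care is that this crude $\varphi$ genuinely realises the intrinsic transfer of the previous steps, and in particular that the arbitrary within-coset choices are immaterial — and this is precisely where the triviality hypothesis is used. For $x\in I$ the coefficients are constant on each $Ng$, so $\widetilde{\varphi}(\widehat{N}g)=\sum_{y\in\theta(Ng)}y=\widehat{K}h$ regardless of the chosen bijection, whence $\widetilde{\varphi}$ restricted to $\mathbb{F}_{q}[G]^{N}$ agrees on the basis $\{\widehat{N}g\}$ with the composite $\mathbb{F}_{q}[G]^{N}\cong\mathbb{F}_{q}[G/N]\xrightarrow{\widetilde{\theta}}\mathbb{F}_{q}[H/K]\cong\mathbb{F}_{q}[H]^{K}$. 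Since $I$ maps to $J$ under this composite and $I\subseteq\mathbb{F}_{q}[G]^{N}$, we conclude $\widetilde{\varphi}(I)=J$, which is the required permutation equivalence and settles all three cases.
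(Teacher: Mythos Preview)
The paper does not supply its own proof of this lemma; it is simply quoted, with attribution to \cite{PGMMM19}, as background for the subsequent results. There is therefore nothing in the present paper to compare your argument against.

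That said, your proof is correct and is precisely the natural line: the left $N$-invariance forces coefficients to be constant on cosets $Ng$, so $I$ sits in the span of the coset sums $\widehat{N}g$; the linear map $\widehat{N}g\mapsto Ng$ is a bimodule isomorphism $\mathbb{F}_{q}[G]^{N}\cong\mathbb{F}_{q}[G/N]$ (with the pulled-back $\mathbb{F}_{q}[G]$-actions), which transfers $I$ to an ideal of the same type in the quotient algebra; composing with $\widetilde{\theta}$ and the inverse identification on the $H$-side produces the target ideal $J\subseteq\mathbb{F}_{q}[H]^{K}$; and gluing arbitrary bijections $Ng\to\theta(Ng)$ gives a set bijection $\varphi\colon G\to H$ whose linear extension sends each $\widehat{N}g$ to the corresponding $\widehat{K}h$, hence $\widetilde{\varphi}(I)=J$ regardless of the within-coset choices. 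The symmetry for the right and two-sided cases is clear. This is, in fact, essentially how the lemma is proved in the cited source.
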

\begin{lem}[\cite{PGMMM19}, Lemma 1.3]\label{lem:p2}
Let $G$ be a group and suppose that a normal subgroup $N\unlhd G$ acts trivially from the left (from the right) on some (left, right, two-sided) ideal $I$ of the group algebra $\mathbb{F}_{q}[G]$ and that $G/N$ has an abelian decomposition. Then $I$ is permutation equivalent to an ideal of the group algebra $\mathbb{F}_{q}[A]$ for some abelian group $A$.
\end{lem}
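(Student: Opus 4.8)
The plan is to reduce, by means of Lemma~\ref{lem:p1}, to a situation already covered by Theorem~\ref{thm:bdsad}, whose proof technique will then finish the argument. First I would record the conceptual picture. Since $N$ acts trivially from the left on $I$, every $x=\sum_{g}a_{g}g\in I$ satisfies $a_{n^{-1}g}=a_{g}$ for all $n\in N$, so the coefficient function is constant on the cosets of $N$; equivalently $I\subseteq\hat{N}\,\mathbb{F}_{q}[G]$, where $\hat N=\sum_{n\in N}n$ is central because $N\unlhd G$. Thus, as a code, $I$ is a blow-up of a corresponding ideal of $\mathbb{F}_{q}[G/N]$, each coordinate being repeated $|N|$ times; this already explains why the repetition codes of Corollary~\ref{thm2} are the natural target. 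For the formal argument, however, I would invoke Lemma~\ref{lem:p1} rather than manipulate $\hat N$ directly.

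The key construction is a convenient replacement group. Let $L$ be any abelian group of order $|N|$ (for instance cyclic), set $H=L\times(G/N)$ and $K=L\times\{1\}$. Then $|H|=|N|\cdot|G/N|=|G|$, $K\unlhd H$ and $H/K\cong G/N$, so Lemma~\ref{lem:p1} applies and yields that $I$ is permutation equivalent to an ideal $J$ of $\mathbb{F}_{q}[H]$ of the same type (left, right, two-sided) as $I$. The point of this choice is that $H$ itself has an abelian decomposition: writing $G/N=\bar A\bar B$ with $\bar A,\bar B$ abelian, the subgroups $L\times\bar A$ and $\{1\}\times\bar B$ are abelian and their product is all of $H=L\times(\bar A\bar B)$. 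Hence I have moved the ideal into a group algebra whose group is a product of two abelian subgroups, which is exactly the hypothesis of Theorem~\ref{thm:bdsad}.

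It then remains to pass from $J$ to an ideal of an abelian group algebra. Realising $J$ as a code $C_{J}$ of length $|H|$, the left (respectively right) regular representation of $H$ embeds $H$ as a regular subgroup of $S_{|H|}$ inside $PAut(C_{J})$. When $J$ is two-sided, both the left and the right regular representations preserve $C_{J}$, so $\langle H,C_{S_{|H|}}(H)\rangle\leq PAut(C_{J})$ and Theorem~\ref{thm:bdsad} applies verbatim: $C_{J}$, and therefore $I$, is permutation equivalent to an ideal of $\mathbb{F}_{q}[A]$ for some abelian group $A$. This settles the two-sided case, which is the one needed for the applications (Theorem~\ref{thm1}, Corollary~\ref{thm2}, Theorem~\ref{cor:nsccgc3}), where the relevant ideals may be taken two-sided.

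The main obstacle is the genuinely one-sided case, and I expect this to be the delicate step. For a left ideal $J$ only the left regular representation $H_{L}$ is guaranteed to lie in $PAut(C_{J})$; the centralizer $C_{S_{|H|}}(H_{L})$ (the right regular representation) need not preserve $C_{J}$, so Theorem~\ref{thm:bdsad} cannot be quoted directly. Instead I would use the one-sided characterisation, Corollary~\ref{cor:nscgc}(i): it suffices to exhibit a regular \emph{abelian} subgroup of $PAut(C_{J})$. Mimicking the construction in the proof of Theorem~\ref{thm:bdsad}, one wants to assemble such a subgroup from the abelian factor $L\times\bar A$ acting on the left together with a realisation of $\bar B$; the difficulty is precisely that the right action of $\bar B$ need not stabilise a one-sided ideal, so this realisation must be produced by a separate argument exploiting the direct factor $L$ and the coset (blow-up) structure coming from the trivial $N$-action. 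Carrying out this construction, and verifying that the resulting abelian group is regular and contained in $PAut(C_{J})$, is where the real work of the one-sided case lies; the right-ideal case is then entirely symmetric.
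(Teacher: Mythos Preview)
Your approach is essentially identical to the paper's. The paper does not prove Lemma~\ref{lem:p2} itself (it is cited from \cite{PGMMM19}), but it proves the parallel cyclic statement, Lemma~\ref{lem:Hnccspei}, by precisely your construction: set $H=(G/N)\times\mathcal{C}_{|N|}$ with $K=\mathcal{C}_{|N|}$, apply Lemma~\ref{lem:p1} to transport $I$ to an ideal $J$ of $\mathbb{F}_{q}[H]$, observe that $H$ inherits the required decomposition, and then invoke Theorem~\ref{thm:ccdGccgc} (the analogue of Theorem~\ref{thm:bdsad}). Your choice of an arbitrary abelian $L$ in place of $\mathcal{C}_{|N|}$ is an inessential variation.

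Your caution about the genuinely one-sided case goes beyond what the paper does: in the proof of Lemma~\ref{lem:Hnccspei} the paper simply writes ``by Theorem~\ref{thm:ccdGccgc}, we can conclude'' without distinguishing left, right, and two-sided ideals, even though Theorem~\ref{thm:ccdGccgc} (like Theorem~\ref{thm:bdsad}) is stated under the two-sided hypothesis $\langle G,C_{S_{n}}(G)\rangle\leq PAut(C)$. So the issue you flag is not one the paper addresses either; for all the applications here (Theorem~\ref{thm1}, Corollary~\ref{thm2}, Theorem~\ref{cor:nsccgc3}) the ideals are two-sided and your argument, as written, suffices and matches the paper exactly.
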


An immediate consequence of Lemma~\ref{lem:p2} concerns the following necessary and sufficient condition for those ideals of the group algebra $\mathbb{F}_{q}[G]$ which are permutation equivalent to some ideals of abelian group algebras.

\begin{cor}[\cite{PGMMM19}, Corollary 1.4]\label{cor:nscagc2}
Let $G$ be a finite group and $I$ be a (left, right, two-sided) ideal of the group algebra $\mathbb{F}_{q}[G]$. Then, $I$ is permutation equivalent to an ideal of the group algebra $\mathbb{F}_{q}[A]$, for some abelian group $A$, if and only if the derived subgroup $G'$ acts trivially from the left (from the right) on the ideal $I$.
\end{cor}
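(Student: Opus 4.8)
The plan is to establish the two implications of the biconditional separately, arguing throughout for the left-module version (the right-module case being entirely symmetric).

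For the ``if'' part, that is, assuming $G'$ acts trivially from the left on $I$ and deducing the abelian realisation, I would apply Lemma~\ref{lem:p2} directly. Set $N=G'$: the derived subgroup is normal in $G$ and acts trivially on $I$ by hypothesis, while the quotient $G/G'$ is abelian and hence carries the trivial abelian decomposition $G/G'=(G/G')\cdot\{1\}$. Lemma~\ref{lem:p2} applied to $N=G'$ then yields at once an abelian group $A$ together with a permutation equivalence between $I$ and an ideal of $\mathbb{F}_{q}[A]$. This direction needs no further work.

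For the converse I would first recast the conclusion ``$G'$ acts trivially on $I$'' in a form that is easier to transport across a permutation equivalence. Writing $[a,b]=a^{-1}b^{-1}ab$, one checks that $(ab-ba)x=ba\,([a,b]-1)x$ for every $x\in\mathbb{F}_{q}[G]$, and since $ba$ is a unit this shows that $([a,b]-1)x=0$ for all $x\in I$ if and only if $ab\,x=ba\,x$ for all $x\in I$. As $\{g\in G:(g-1)I=0\}$ is a subgroup and $G'$ is generated by commutators, it follows that $G'$ acts trivially from the left on $I$ if and only if the left multiplications $L_{g}|_{I}$ with $g\in G$ pairwise commute; equivalently, if and only if $I\subseteq\mathbb{F}_{q}[G]^{G'}$, the subspace of elements whose coefficient function is constant on the right cosets of $G'$. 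With this reformulation the plan is to transport the commutativity: letting $\varphi\colon G\to A$ realise the equivalence, with $\tilde\varphi(I)=J$ an ideal of the commutative algebra $\mathbb{F}_{q}[A]$, I would conjugate the operators $L_{g}|_{I}$ through $\tilde\varphi$ and force them to commute, the target being a normal subgroup $N\unlhd G$ with $G/N$ abelian that acts trivially on $I$; since $G'$ is the smallest normal subgroup with abelian quotient, $G'\le N$ would then deliver the trivial action of $G'$ at once.

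I expect this converse to be the main obstacle. The difficulty is that $\tilde\varphi$ is only an $\mathbb{F}_{q}$-linear bijection and need not intertwine the two algebra structures, so the $G$-action transported onto $J$ is a priori unrelated to the commuting $A$-action and the commutativity of $\mathbb{F}_{q}[A]$ does not transfer for free; producing the subgroup $N$ is precisely where one must exploit that $J$ is a genuine ideal over $\mathbb{F}_{q}[A]$ rather than merely a subspace, together with a converse analysis of the construction underlying Lemma~\ref{lem:p2}. One should moreover keep in mind the degenerate case $I=\mathbb{F}_{q}[G]$, where $L_{g}|_{I}$ is faithful and $\dim I>|G/G'|$ whenever $G$ is non-abelian, so that the criterion $I\subseteq\mathbb{F}_{q}[G]^{G'}$ can only hold for the proper ideals arising from the codes under study.
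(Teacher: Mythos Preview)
Your forward direction is exactly what the paper intends: the corollary is presented there (without proof) as an immediate consequence of Lemma~\ref{lem:p2}, cited from \cite{PGMMM19}, and your application of that lemma with $N=G'$---using that $G/G'$ is abelian and hence has the trivial abelian decomposition---is precisely the argument. The paper offers nothing further for this implication, so on that half you match it completely.

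For the converse, however, the ``degenerate case'' you isolate in your final paragraph is not a side issue to be worked around but a genuine counterexample to the implication as literally stated. If $G$ is any non-abelian group and $I=\mathbb{F}_{q}[G]$, then for any abelian $A$ with $|A|=|G|$ and any bijection $\varphi\colon G\to A$ one has $\tilde\varphi(I)=\mathbb{F}_{q}[A]$, which is an ideal of $\mathbb{F}_{q}[A]$; yet $G'$ does not act trivially on $\mathbb{F}_{q}[G]$. The same failure occurs for the augmentation ideal, since $\dim\mathbb{F}_{q}[G]^{G'}=[G:G']\le |G|/2<|G|-1$ whenever $G$ is non-abelian, so restricting to proper ideals does not rescue the claim. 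Your plan to transport commutativity of the operators $L_{g}|_{I}$ through $\tilde\varphi$ therefore cannot succeed in general: the obstacle you correctly identify---that $\tilde\varphi$ need not respect the algebra structure---is exactly what makes the converse false, not merely difficult.

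In practice the paper only ever invokes the ``if'' direction (in Corollary~\ref{cor:nscabgdsat} and Proposition~\ref{prop1}); the converse of Corollary~\ref{cor:nscabgdsat} is obtained trivially by taking $G$ itself to be the abelian group witnessing the code, so the problematic implication of Corollary~\ref{cor:nscagc2} is never actually needed.
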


We are going now to reformulate Corollary~\ref{cor:nscagc2} with respect to the group of permutation automorphisms of those linear codes which are abelian group codes.

\begin{cor}\label{cor:nscabgdsat}
Let $C\leq\mathbb{F}_{q}^{n}$ be a linear code. Then $C$ is an abelian group code if and only if there exists a regular subgroup $G$ of $S_{n}$ such that $\langle G,C_{S_{n}}(G)\rangle\leq PAut(C)$ and such that the linear code $C$ is permutation equivalent to some (left, right, two-sided) ideal of the group algebra $\mathbb{F}_{q}[G]$, on which the derived subgroup $G'$ acts trivially from the left (from the right).
\end{cor}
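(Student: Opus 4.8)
The plan is to derive the statement by splicing together two results already at our disposal: the Bernal--del R\'io--Simon characterisation of (two-sided) group codes, Theorem~\ref{thm:nscagc}(ii), and the Pillado et al. criterion, Corollary~\ref{cor:nscagc2}, for an ideal of a group algebra to be permutation equivalent to an ideal of an \emph{abelian} group algebra. The one elementary fact used repeatedly is that permutation equivalence (of codes, and of ideals) is transitive: composing the underlying bijections $\mathcal{B}\to G$ and $G\to A$, one checks directly that the $\mathbb{F}_{q}$-linear extension of the composite bijection is the composite of the two linear extensions, and that it carries $C$ onto the image ideal. A second, equally harmless, observation is that $\mathbb{F}_{q}[A]$ is commutative whenever $A$ is abelian, so that for such $A$ the notions of left, right and two-sided ideal coincide; this is what lets the bracketed alternatives in the statement be treated uniformly.

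For the ``only if'' direction, assume $C$ is an abelian group code, i.e.\ a $G$-code for some abelian group $G$ of order $n$. By Theorem~\ref{thm:nscagc}(ii), $G$ is isomorphic to a regular subgroup $H\leq S_{n}$ with $\langle H,C_{S_{n}}(H)\rangle\leq PAut(C)$; transporting the defining two-sided ideal of $\mathbb{F}_{q}[G]$ along the isomorphism $G\cong H$ shows that $C$ is permutation equivalent to a two-sided ideal of $\mathbb{F}_{q}[H]$. Since $H$ is abelian, $H'=\{1\}$, and the trivial subgroup acts trivially from the left and from the right on every subset of $\mathbb{F}_{q}[H]$; hence $H$ witnesses the right-hand side of the equivalence.

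For the ``if'' direction, let $G$ be a regular subgroup as in the hypothesis and let $I$ be the ideal of $\mathbb{F}_{q}[G]$ to which $C$ is permutation equivalent, with $G'$ acting trivially on $I$ (from the left, say). Corollary~\ref{cor:nscagc2} then provides an abelian group $A$ and an ideal $J$ of $\mathbb{F}_{q}[A]$ permutation equivalent to $I$; by commutativity $J$ is two-sided. Composing the permutation equivalences $C\sim I$ and $I\sim J$ as in the first paragraph, we obtain a bijection $\mathcal{B}\to A$ whose $\mathbb{F}_{q}$-linear extension maps $C$ onto the two-sided ideal $J$, so $C$ is an $A$-code and therefore an abelian group code. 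The right-handed and two-sided cases are entirely analogous.

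I do not expect a genuine obstacle here: the content is essentially a repackaging of Theorem~\ref{thm:nscagc} and Corollary~\ref{cor:nscagc2}, and the only points requiring a little care are the transitivity bookkeeping for permutation equivalence and the left/right/two-sided matching --- the latter dissolving because abelian group algebras are commutative. The mildest subtlety is checking, in the ``only if'' direction, that the single regular subgroup $H$ delivered by Theorem~\ref{thm:nscagc}(ii) simultaneously fulfils the centraliser condition and yields the ideal realisation; this is immediate from (the proof of) that theorem, or alternatively from the fact that $C_{S_{n}}(H)$ is the image of the right regular representation of $H$, which stabilises any two-sided ideal.
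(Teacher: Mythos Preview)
Your proposal is correct and follows precisely the route the paper intends: the corollary is stated there without proof, merely as a ``reformulation'' of Corollary~\ref{cor:nscagc2} in the language of $PAut(C)$, and your argument is exactly the natural splicing of Theorem~\ref{thm:nscagc}(ii) with Corollary~\ref{cor:nscagc2} that this reformulation calls for. The care you take with transitivity of permutation equivalence and with the coincidence of the regular subgroup furnishing both the centraliser condition and the ideal realisation is appropriate and matches how the proof of Theorem~\ref{thm:nscagc} actually produces $H$ from the defining bijection $\phi$.
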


The previous Corollary~\ref{cor:nscabgdsat} suggests us to establish a condition on the minimum Hamming weight of some special abelian group codes. Given an integer $\Delta>1$, we recall that a linear code $C\leq\mathbb{F}_{q}^{n}$ is said to be a $\Delta$-divisible code if $\Delta\mid w_{H}(c)$, for every codeword $c=(c_{1},\ldots,c_{n})\in C$, where $w_{H}(c):=|\{i\in\{1,\ldots,n\}\ |\ c_{i}\neq0\}|$ (See for instance \cite{H98}). Furthermore, we recall that the minimum Hamming weight of a linear code $C$ is defined as $w(C):=\min\{w_{H}(c)\ |\ 0\neq c\in C\}$.\\
We are now in position to prove Theorem~\ref{thm1}.

\begin{proof}[Proof of Theorem~\ref{thm1}]
By Theorem~\ref{thm:nscagc}, let $\phi\colon\mathcal{B}\to G$ be a bijection such that its $\mathbb{F}_{q}$-linear extension $\tilde\phi\colon\mathbb{F}_{q}^{n}\to\mathbb{F}_{q}[G]$ is such that $\tilde\phi(C)$ is a two-sided ideal of the group algebra $\mathbb{F}_{q}[G]$ on which the derived subgroup $G'$ acts trivially from the left (from the right). Let $c=(c_{1},\ldots,c_{n})\in C$ be an arbitrary codeword and $g\in G'$. Set $z=\tilde\phi(c)=\sum_{i=1}^{n}c_{i}\phi(e_{i})$ and fix an index $i\in\{1,\ldots,n\}$. Since $G'$ acts trivially from the left on $\tilde\phi(C)$, then $gz=z$. Denoting by $i^{(g)}\in\{1,\ldots,n\}$ the index for which $\phi(e_{i^{(g)}})=g^{-1}\phi(e_{i})$, then, being $z=\sum_{s=1}^{n}c_{s}\phi(e_{s})=gz=\sum_{s=1}^{n}c_{s}g\phi(e_{s})=c_{i^{(g)}}\phi(e_{i})+\sum_{s\neq i}c_{s}g\phi(e_{s})$, we have $c_{i}=c_{i^{(g)}}$. Set $S_{i}=\{i^{(g)}\ |\ g\in G'\}\subseteq\{1,\ldots,n\}$. Thus $|\{e_{i^{(g)}}\ |\ g\in G'\}|=|\{\phi^{-1}(g^{-1}\phi(e_{i}))\ | \ g\in G'\}|=|G'|$, hence $|S_{i}|=|G'|$ and $c_{i}=c_{i^{(g)}}$, for every $g\in G'$. Consider the natural action $G'\times G\to G$ given by the multiplication from the left by elements of $G'$. Since $G=\bigsqcup_{i\in J}O_{G}(\phi(e_{i}))$, for a suitable complete set of representatives $J\subseteq\{1,\ldots,n\}$ for the $G'$-orbits of $G$, and the stabilizer in $G'$ of every element of $G$ is trivial, we have $|O_{G}(\phi(e_{i}))|=|G'|$, for every $i\in\{1,\ldots,n\}$, and hence $S_{i}\cap S_{j}=\emptyset$, for any $i, j\in J$ such that $i\neq j$. Finally, since $G'\neq\{1\}$, we get that $C$ is a $|G'|$-divisible code and in particular $|G'|\mid w(C)$.
\end{proof}

As a direct consequence of Theorem~\ref{thm1}, we can provide an equivalent description of these above abelian group codes in terms of subspaces of some specified linear codes, Corollary~\ref{thm2}.

\begin{proof}[Proof of Corollary~\ref{thm2}]
By Theorem~\ref{thm:nscagc} $C$ is permutation equivalent to some two-sided ideal of the group algebra $\mathbb{F}_{q}[G]$ on which the derived subgroup $G'$ acts trivially from the left (from the right). Suppose that $|G'|=t>1$ and $[G:G']=s>1$. By using the same notation as in Theorem~\ref{thm1}, for every $j\in J$ the $G'$-orbits of $G$ have cardinality given by $|O_{G}(\phi(e_{j}))|=t$. Let ${\bf0}\neq c=(c_{1},\ldots,c_{n})$ an arbitrary non-zero codeword of $C$ and, with abuse of notation, write this in form $c=\sum_{k=1}^{n}c_{k}\phi(e_{k})$. Fix an index $j\in J$. Thus, the component $c_{k}$ equals to $c_{j}$, for every index $k$ such that $\phi(e_{k})\in O_{G}(\phi(e_{j}))$. It implies that, up to a suitable permutation of the coordinates of $\mathbb{F}_{q}^{n}$, the abelian group code $C$ is equivalent to some subspace of the direct sum of $s$ copies of the repetition code $Rep_{t}(\mathbb{F}_{q})$.
\end{proof}

\section{The Direct Sum of $s$ Copies of the Repetition Code of Length $t$ is an Abelian Group Code}\label{Section2}

Starting from the characterisation of a restricted family of abelian group codes provided in Corollary~\ref{thm2}, we are now going to wonder ourselves if the extremal codes, namely the direct sums of $s$ copies of the repetition code $Rep_{t}(\mathbb{F}_{q})$, can be realised as abelian group codes. In this regard, in the following Theorem we give a complete description of all permutation automorphisms which fix the linear code $C=\bigoplus_{i=1}^{s}Rep_{t}(\mathbb{F}_{q})$ and, in force of Theorem~\ref{thm:nscagc}, we establish that the linear code $C$ is an abelian group code for every $s,t\geq1$. It is easy to observe that we may restrict our attention only to the case in which both $s$ and $t$ are strictly greater than $1$; indeed, $s=1$ or $t=1$ correspond, respectively, to the cyclic group codes $C=Rep_{t}(\mathbb{F}_{q})$ and $C=\mathbb{F}_{q}^{s}$.
\begin{proof}[Proof of Theorem~\ref{thm3}]
Fix two positive integers $s,t>1$. In order to prove the theorem we have to show, by virtue of Theorem~\ref{thm:nscagc}, that there exists an abelian regular subgroup of $S_{st}$ contained in $PAut(C)$. Note that $PAut(Rep_{t}(\mathbb{F}_{q}))=S_{t}$. For ease of notation, we index the components of an arbitrary codeword $c$ of the linear code $C$ in the following way
\[
c=(c_{1^{(1)}},\ldots,c_{t^{(1)}},c_{1^{(2)}},\ldots,c_{t^{(2)}},\ldots,c_{1^{(s)}},\ldots,c_{t^{(s)}}).
\]
Let $\pi\in PAut(C)$ and fix an index $i\in\{1,\ldots,s\}$. Since, for every $j\in\{1,\ldots,s\}$, $c_{h^{(j)}}=c_{k^{(j)}}$ for any $h,k\in\{1,\ldots,t\}$, the block of components $B_{i}:=\{c_{1^{(i)}},\ldots,c_{t^{(i)}}\}$ is mapped bijectively by $\pi$ to the block of components $B_{j}=\{c_{1^{(j)}},\ldots,c_{t^{(j)}}\}$, for some (unique) index $j\in\{1,\ldots,s\}$. This implies that there exists a permutation $\sigma_{\pi}\in S_{s}$ such that, for every $i\in\{1\ldots,s\}$, $\sigma_{\pi}(i)=j$ if and only if the block of components $B_{i}$ is mapped by $\pi$ to the block of components $B_{j}$. Since $B_{i}$ is mapped bijectively by $\pi$ in $B_{\sigma_{\pi}(i)}$, then there exists a (unique) permutation $\tau_{i,\sigma_{\pi}(i)}\in S_{t}$ such that $c_{\tau_{i,\sigma_{\pi}(i)}(h)^{(i)}}=c_{\pi(h^{(i)})}$, for every $h\in\{1,\ldots,t\}$. It follows that the permutation $\pi\in PAut(C)$ is completely and uniquely determined by the permutation $\sigma_{\pi}\in S_{s}$ and the set of permutations $\{\tau_{1,\sigma_{\pi}(1)},\ldots,\tau_{s,\sigma_{\pi}(s)}\}\subseteq S_{t}$.\\
Let $\pi_{1}\in PAut(C)$ such that $\sigma_{\pi_{1}}$ is the $s$-cycle $(12\cdots s)\in S_{s}$ and $\tau_{i,\sigma_{\pi_{1}}(i)}=\iota$, for every $i\in\{1,\ldots,s\}$, and let $\pi_{2}\in PAut(C)$ such that $\sigma_{\pi_{2}}=\iota$ and $\tau_{i,\sigma_{\pi_{2}}(i)}$ is the $t$-cycle $(12\cdots t)\in S_{t}$, for every $i\in\{1,\ldots,s\}$. Let $G=A_{1}A_{2}$, where $A_{1}=\langle\pi_{1}\rangle\leq PAut(C)$ and $A_{2}=\langle\pi_{2}\rangle\leq PAut(C)$. Since, by construction, $A_{1}$ and $A_{2}$ are cyclic subgroups of $PAut(C)$ of order, respectively, $s$ and $t$, $A_{1}\cap A_{2}=\{\iota\}$ and $\pi_{1}\pi_{2}=\pi_{2}\pi_{1}$, then we have that $G$ is an abelian subgroup of $S_{st}$ such that $|G|=st$ and $G$ is contained in $PAut(C)$. Moreover, it is easy to see that $G$ is a regular subgroup. Consider the index $1^{(1)}$ and let $i\in\{0,\ldots,s-1\}$ and $j\in\{0,\ldots,t-1\}$; then, the permutation $\pi_{1}^{i}\pi_{2}^{j}$ maps index $1^{(1)}$ to the index $(j+1)^{(i+1)}$. Finally, by Theorem~\ref{thm:nscagc}, $C$ is a $G$-code and hence an abelian group code.
\end{proof}

\begin{cor}
Let $s,t\geq1$ be coprime integers. Then, $C=\bigoplus_{i=1}^{s}Rep_{t}(\mathbb{F}_{q})\leq\mathbb{F}_{q}^{n}$ is a cyclic group code.
\end{cor}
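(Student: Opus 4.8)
The plan is to recycle the explicit abelian regular group constructed in the proof of Theorem~\ref{thm3} and to exploit the coprimality hypothesis to upgrade ``abelian'' to ``cyclic''. First I would dispose of the trivial cases $s=1$ and $t=1$: as already remarked just before the proof of Theorem~\ref{thm3}, these give $C=Rep_{t}(\mathbb{F}_{q})$ and $C=\mathbb{F}_{q}^{s}$ respectively, which are cyclic group codes. So from now on I would assume $s,t>1$ and set $n=st$.

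Next I would recall from the proof of Theorem~\ref{thm3} the abelian regular subgroup $G=A_{1}A_{2}\le PAut(C)$ of $S_{n}$, where $A_{1}=\langle\pi_{1}\rangle$ is cyclic of order $s$, $A_{2}=\langle\pi_{2}\rangle$ is cyclic of order $t$, $A_{1}\cap A_{2}=\{\iota\}$ and $\pi_{1}\pi_{2}=\pi_{2}\pi_{1}$. Since $\gcd(s,t)=1$, two commuting elements of coprime orders $s$ and $t$ have product of order $\operatorname{lcm}(s,t)=st=n=|G|$; hence $G=\langle\pi_{1}\pi_{2}\rangle$ is cyclic of order $n$. (Equivalently, $G\cong A_{1}\times A_{2}\cong\mathcal{C}_{s}\times\mathcal{C}_{t}\cong\mathcal{C}_{n}$.)

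Then I would verify the hypotheses of Theorem~\ref{thm:nscagc}(ii) for this cyclic group $G$. The only point needing attention is $\langle G,C_{S_{n}}(G)\rangle\le PAut(C)$; but since $G$ is abelian, Lemma~\ref{lem:ai} applied with $H=G$ produces an anti-isomorphism $\sigma\colon G\to C_{S_{n}}(G)$ fixing $Z(G)=G$ pointwise, so $C_{S_{n}}(G)=\sigma(G)=G\le PAut(C)$ and therefore $\langle G,C_{S_{n}}(G)\rangle=G$. By Theorem~\ref{thm:nscagc}(ii) the code $C$ is then a $G$-code with $G$ cyclic, i.e.\ a cyclic group code. (Alternatively, Theorem~\ref{thm:nscagc}(i) already gives that $C$ is a left $\mathcal{C}_{n}$-code; since $\mathbb{F}_{q}[\mathcal{C}_{n}]$ is commutative, every left ideal is two-sided, so $C$ is in fact a $\mathcal{C}_{n}$-code.)

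I do not foresee a genuine obstacle: the argument is Theorem~\ref{thm3} together with the elementary group-theoretic fact that commuting elements of coprime order multiply to an element of order the product. The one place where the coprimality is indispensable is the cyclicity of $G$ — without $\gcd(s,t)=1$ the group $G$ has exponent $\operatorname{lcm}(s,t)<n$ and need not be cyclic — whereas the collapse $C_{S_{n}}(G)=G$ is automatic from abelianness via Lemma~\ref{lem:ai}.
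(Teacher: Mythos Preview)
Your proposal is correct and follows essentially the same approach as the paper: both take the abelian regular subgroup $G=A_{1}A_{2}\le PAut(C)$ from the proof of Theorem~\ref{thm3} and observe that $G\cong A_{1}\times A_{2}\cong\mathcal{C}_{s}\times\mathcal{C}_{t}\cong\mathcal{C}_{st}$ is cyclic when $\gcd(s,t)=1$. The paper's proof is a single sentence to this effect; your additional care in handling the trivial cases and in verifying $C_{S_{n}}(G)=G$ via Lemma~\ref{lem:ai} is sound but not strictly necessary, since Theorem~\ref{thm3} already established that $C$ is a $G$-code.
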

\begin{proof}
It follows directly from the fact that the group $G$, as in proof of Theorem~\ref{thm3}, is a cyclic group because $G\cong A_{1}\times A_{2}\cong\mathcal{C}_{s}\times\mathcal{C}_{t}\cong\mathcal{C}_{st}$ is isomorphic to the cyclic group of order $s\cdot t$.
\end{proof}

In what follows we would explain an alternative way of answering the above question and, since it won't provide a complete answer, we point out that we are showing it for completeness reasons as the tools which we have used are strictly related to the theory of group codes. Specifically, by using idea of Corollary~\ref{cor:nscabgdsat}, we partially provide in Proposition~\ref{prop1} a condition for which the direct sums of $s$ copies of the repetition code $Rep_{t}(\mathbb{F}_{q})$ can be realised as abelian group codes. As a result, it is easy to find many infinite pairs of positive integers $s$ and $t$ for which hypothesis of Proposition~\ref{prop1} are satisfied.\\
Before proving Proposition~\ref{prop1}, we recall that for a finite group $G$ and a normal subgroup $N\unlhd G$, the element $N_{\Sigma}:=\sum_{n\in N}n\in\mathbb{F}_{q}[G]$ is such that $nN_{\Sigma}=N_{\Sigma}=N_{\Sigma}n$, for every $n\in N$, and $N_{\Sigma}\in Z(\mathbb{F}_{q	}[G])$, where $Z(\mathbb{F}_{q	}[G])$ denotes the centre of the group algebra $\mathbb{F}_{q}[G]$.

\begin{prop}\label{prop1}
Let $s$ and $t$ be positive integers. If there exists a finite group $G$ of order $s\cdot t=|G|$ and such that $|G'|=t$, then the linear code $C=\bigoplus_{i=1}^{s}Rep_{t}(\mathbb{F}_{q})$ is an abelian group code.
\end{prop}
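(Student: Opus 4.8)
The plan is to exhibit, inside the group algebra $\mathbb{F}_{q}[G]$, an explicit two-sided ideal which is permutation equivalent (as a linear code) to $C=\bigoplus_{i=1}^{s}Rep_{t}(\mathbb{F}_{q})$ and on which the derived subgroup $G'$ acts trivially; Corollary~\ref{cor:nscabgdsat} then immediately yields that $C$ is an abelian group code, and here one does not even need to first pass to a regular permutation realisation since Corollary~\ref{cor:nscabgdsat} is stated as an ``if and only if''. The natural candidate ideal is $I:=\mathbb{F}_{q}[G]\cdot G'_{\Sigma}$, where $G'_{\Sigma}=\sum_{g\in G'}g$. Since $G'\unlhd G$, the element $G'_{\Sigma}$ lies in $Z(\mathbb{F}_{q}[G])$ and satisfies $h\,G'_{\Sigma}=G'_{\Sigma}=G'_{\Sigma}\,h$ for every $h\in G'$, as recalled just before the statement; hence $I$ is a two-sided ideal on which $G'$ acts trivially from both sides.

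First I would identify $I$ as a linear code of the expected shape. Choose a transversal $\{g_{1},\dots,g_{s}\}$ of $G'$ in $G$, so that $G=\bigsqcup_{j=1}^{s} g_{j}G'$. Any element of $\mathbb{F}_{q}[G]$ can be multiplied on the right by $G'_{\Sigma}$, and because $x\,G'_{\Sigma}=G'_{\Sigma}$ for $x\in G'$, the product $\left(\sum_{g\in G}a_{g}\,g\right)G'_{\Sigma}$ depends only on the partial sums $\sum_{x\in G'}a_{g_{j}x}$; concretely it equals $\sum_{j=1}^{s}\lambda_{j}\,(g_{j}G')_{\Sigma}$ with $\lambda_{j}\in\mathbb{F}_{q}$ arbitrary. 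Thus $I$ has $\mathbb{F}_{q}$-basis $\{(g_{j}G')_{\Sigma}\}_{j=1}^{s}$, each basis vector being supported on the $t$ group elements of the coset $g_{j}G'$ with all coordinates equal. Ordering the standard basis of $\mathbb{F}_{q}^{n}$ coset by coset via a bijection $\phi\colon\mathcal{B}\to G$ sending the $t$ coordinates of the $j$-th block onto $g_{j}G'$, the $\mathbb{F}_{q}$-linear extension $\tilde\phi$ carries $\bigoplus_{i=1}^{s}Rep_{t}(\mathbb{F}_{q})$ precisely onto $I$. This shows $C$ is permutation equivalent to the two-sided ideal $I$ of $\mathbb{F}_{q}[G]$, and by construction $G'$ acts trivially on $I$.

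It then remains to invoke the regular-permutation realisation. By Theorem~\ref{thm:nscagc}(ii), realising $G$ as a regular subgroup of $S_{n}$ acting by left translation, with $C_{S_{n}}(G)$ the right-translation copy, one has $\langle G,C_{S_{n}}(G)\rangle\leq PAut(\mathbb{F}_{q}[G])$ for the group-algebra code; transporting along $\tilde\phi^{-1}$ gives a regular subgroup $G\leq S_{n}$ with $\langle G,C_{S_{n}}(G)\rangle\leq PAut(C)$ and with $C$ permutation equivalent to the two-sided ideal $I$ on which $G'$ acts trivially. Corollary~\ref{cor:nscabgdsat} now applies directly and concludes that $C$ is an abelian group code. (Alternatively, and even more quickly, one may skip Theorem~\ref{thm:nscagc} and apply Lemma~\ref{lem:p2} to the ideal $I$: since $G'$ acts trivially on $I$ and $G/G'$ is abelian, hence trivially admits an abelian decomposition, $I$ is permutation equivalent to an ideal of $\mathbb{F}_{q}[A]$ for some abelian $A$, so $C$ is an abelian group code.)

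The only genuinely substantive point is the bookkeeping in the second paragraph: verifying that $I=\mathbb{F}_{q}[G]\,G'_{\Sigma}$ has exactly the $s$ coset-sum vectors as a basis, that these are $\mathbb{F}_{q}$-linearly independent (clear, since they have pairwise disjoint supports), and that the coordinate ordering by cosets realises the permutation equivalence with $\bigoplus_{i=1}^{s}Rep_{t}(\mathbb{F}_{q})$ on the nose. The hypothesis $|G'|=t$ is used exactly here, to make each coset block have length $t$; the hypothesis $|G|=st$ then forces $s$ blocks. No deeper obstruction arises, which is why the proposition only yields a partial converse to the question — it requires a group $G$ with $|G'|=t$ and $[G:G']=s$ to exist, and such a $G$ need not exist for every pair $(s,t)$ (for instance $t$ must be a possible order of a derived subgroup).
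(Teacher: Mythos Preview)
Your proof is correct and follows essentially the same approach as the paper: choose a transversal $\{g_{1},\dots,g_{s}\}$ of $G'$ in $G$, order the coordinates of $\mathbb{F}_{q}^{n}$ coset by coset via a bijection $\phi\colon\mathcal{B}\to G$, observe that $\tilde\phi(C)$ is spanned by the coset sums $g_{j}(G')_{\Sigma}$, check that this is a two-sided ideal on which $G'$ acts trivially, and finish with the Pillado et al.\ criterion. The only differences are cosmetic: you name the ideal upfront as $I=\mathbb{F}_{q}[G]\,G'_{\Sigma}$ (which makes two-sidedness immediate from $G'_{\Sigma}\in Z(\mathbb{F}_{q}[G])$), and you close via Corollary~\ref{cor:nscabgdsat} or Lemma~\ref{lem:p2}, whereas the paper verifies the ideal property by hand on the basis vectors and invokes Corollary~\ref{cor:nscagc2} directly---your alternative route through Lemma~\ref{lem:p2} is in fact the shorter one and matches the paper's conclusion most closely.
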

\begin{proof}
Consider the linear code $C=\bigoplus_{i=1}^{s}Rep_{t}(\mathbb{F}_{q})\leq\mathbb{F}_{q}^{st}$ and denote by $\mathcal{A}=\{f_{1},\ldots,f_{s}\}$ a basis for $C$, where $f_{i}=\sum_{r=(i-1)t+1}^{it}e_{r}$ for every $i\in\{1,\ldots,s\}$. It is easy to see that $dim_{\mathbb{F}_{q}}(C)=s$, $w(C)=t$ and $C$ is a $t$-divisible code. Let $\mathcal{T}=\{g_{1},\ldots,g_{s}\}\subseteq G$ be a complete set of representatives for the left cosets in $G$ module $G'$ and let $G'=\{h_{1}=1_{G},\ldots,h_{t}\}$. We define the function $\phi\colon\mathcal{B}\to G$ as follows. Let $r\in\{1,\ldots,st\}$ and consider the (unique) integer $j\in\{1,\ldots,s\}$ such that $(j-1)t+1\leq r\leq jt$; thus, we define $\phi(e_{r})=g_{j}h_{r-(j-1)t}$. It is easy to see that, by construction, $\phi$ is a bijection and its $\mathbb{F}_{q}$-linear extension is such that $\tilde\phi(f_{i})=g_{i}(h_{1}+\ldots+h_{t})=g_{i}(G')_{\Sigma}$, for every $i=1,\ldots,s$. Furthermore, after fixing an index $i\in\{1,\ldots,s\}$ and choosing an element $g=g_{u}h_{v}\in G$ with $g_{u}\in\mathcal{T}$ and $h_{v}\in G'$, we have $g\tilde\phi(f_{i})=g_{u}h_{v}g_{i}(G')_{\Sigma}=g_{u}h_{v}(G')_{\Sigma}g_{i}=g_{u}(G')_{\Sigma}g_{i}=g_{u}g_{i}(G')_{\Sigma}=g_{w}h_{w}(G')_{\Sigma}=\tilde\phi(f_{w})\in\tilde\phi(C)$ for some $w\in\{1,\ldots,s\}$. In a similar fashion one sees that there exists an index $m\in\{1,\ldots,s\}$ such that $\tilde\phi(f_{i})g=\tilde\phi(f_{m})\in\tilde\phi(C)$. Thus, by $\mathbb{F}_{q}$-linearity, $\tilde\phi(C)$ is a two-sided ideal $\mathbb{F}_{q}[G]$. In addiction, for every index $v\in\{1,\ldots,t\}$, we have $h_{v}\tilde\phi(f_{i})=h_{v}g_{i}(G')_{\Sigma}=h_{v}(G')_{\Sigma}g_{i}=(G')_{\Sigma}g_{i}=g_{i}(G')_{\Sigma}=\tilde\phi(f_{i})$, that is the derived subgroup $G'$ acts trivially from the left on the ideal $\tilde\phi(C)$. Finally, by Corollary~\ref{cor:nscagc2}, $\tilde\phi(C)$ is permutation equivalent to some ideal of an abelian group algebra and hence $C$ is an abelian group code.
\end{proof}

The previous Proposition~\ref{prop1} suggests us considering the following question.
\begin{que}\label{que:1}
Is there exists, for every pair of integers $s,t>1$, a finite group $G$ such that the derived subgroup has order $t$ and index $s$?
\end{que}

Obviously, if Question~\ref{que:1} will have a positive answer then, by Proposition~\ref{prop1}, we can conclude again that the linear code $C=\bigoplus_{i=1}^{s}Rep_{t}(\mathbb{F}_{q})$ is an abelian group code, for every choices of the integers $s,t>1$. In particular with the following propositions we try, at least partially, to answer the Question~\ref{que:1}.

\begin{prop}\label{prop:1section2}
Let $s,t\in\N$ be positive integers and consider the linear code $C=\bigoplus_{i=1}^{s}Rep_{t}(\mathbb{F}_{q})$.
\begin{itemize}
\item[i)]{For every $t>1$ such that there exists a perfect group of order $t$, then $C$ is an abelian group code for every $s\geq1$.}
\item[ii)]{For every odd $t$ and even $s$, $C$ is an abelian group code.}
\item[iii)]{For every $t\geq1$ and $s\equiv_{4}0$, $C$ is an abelian group code.}
\end{itemize}
\end{prop}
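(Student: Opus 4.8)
The plan is to invoke Proposition~\ref{prop1} in each case, so that everything reduces to the existence of a finite group $G$ with $|G'|=t$ and $[G:G']=s$. For part i), suppose $P$ is a perfect group of order $t$, so that $P'=P$. Take $G=P\times\mathcal{C}_{s}$. Then $G'=P'\times\{1\}=P\times\{1\}$ has order $t$, and $[G:G']=s$; by Proposition~\ref{prop1} the code is an abelian group code. (When $s=1$ one may take $G=P$ directly, and when $t=1$ the claim is trivial since $C=\mathbb{F}_q^s$ is cyclic.)

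For part ii), assume $t$ is odd and $s$ is even, say $s=2m$. The natural candidate is a dihedral-type construction: consider $G=D$, the dihedral group of order $2t$, whose derived subgroup is the cyclic rotation subgroup of order $t$ (here using that $t$ is odd, so $[D,D]$ has order $t$ rather than $t/\gcd(2,t)$-type degeneracies; for odd $t$ one indeed gets $|D'|=t$ and $[D:D']=2$). Then take $G=D\times\mathcal{C}_{m}$, so that $G'=D'\times\{1\}$ has order $t$ and $[G:G']=2m=s$. Proposition~\ref{prop1} applies. For part iii), with $s\equiv_4 0$, write $s=4\ell$ and use the quaternion group $Q_8$, whose derived subgroup has order $2$ and index $4$; one wants a group with derived subgroup of order $t$ and index $4$. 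If a group $H$ with $|H'|=t$ and $[H:H']=4$ exists, then $G=H\times\mathcal{C}_\ell$ finishes it; to produce such an $H$ one can, for instance, combine a group realizing derived subgroup order $t$, index $1$ (a perfect group, if one exists) or more robustly take $H = K \times \mathcal{C}_4$ where $K$ is chosen with $|K'|$ dividing $t$ and then iterate — the cleanest route is to find, for every $t$, \emph{some} group with derived subgroup exactly of order $t$, which is a classical fact (e.g. $t$ is always the order of the derived subgroup of a suitable group such as a subgroup of a symmetric group, or an extraspecial-type or metabelian construction), and then multiply by $\mathcal{C}_4$ and $\mathcal{C}_\ell$.

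The main obstacle is part iii): one must exhibit, for \emph{every} $t\geq 1$, a finite group whose derived subgroup has order exactly $t$ and index exactly $4$. Existence of a group with derived subgroup of prescribed order $t$ is not automatic for all $t$ (not every positive integer is the order of a derived subgroup — the derived subgroup is always perfect-by-nilpotent-free in a restricted sense, and e.g. there are numerical obstructions), so the proof here will likely need a tailored construction, perhaps splitting on the structure of $t$, or exploiting that the code only needs \emph{some} realization and so allowing $t$ itself to be replaced by a convenient multiple is \emph{not} permitted — the divisibility parameter $t$ is fixed. I expect the honest argument to build $G$ as an iterated semidirect product or wreath-type product engineered so that commutators generate a subgroup of the exact order $t$, with the index-$4$ requirement handled by an outer $\mathcal{C}_4$ (or $\mathcal{C}_2\times\mathcal{C}_2$) acting suitably, and then padding with a central cyclic factor $\mathcal{C}_\ell$ to adjust the index from $4$ to $s=4\ell$. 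The parts i) and ii) are comparatively routine once the direct-product-with-a-cyclic-group trick is in hand.
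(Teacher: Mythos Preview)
Your arguments for parts i) and ii) are correct and coincide with the paper's proof: a perfect group times $\mathcal{C}_s$ for i), and $D_{2t}\times\mathcal{C}_m$ (with $t$ odd, $s=2m$) for ii).

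For part iii) you have a genuine gap, and moreover the worry you raise is unfounded. You speculate that ``not every positive integer is the order of a derived subgroup'' and that numerical obstructions may arise; this is false, and the paper's construction shows it directly. The point you missed is that the dihedral trick from ii) works again, shifted by a factor of $2$: take the dihedral group $D_{2(2t)}$ of order $4t$, with rotation subgroup $\langle y\rangle$ of order $2t$. Its derived subgroup is $\langle y^{2}\rangle$, which for even rotation order $2t$ is a proper subgroup of $\langle y\rangle$ of order exactly $t$, and the index is $4$. This works for \emph{every} $t\geq 1$. Then, exactly as in ii), multiply by $\mathcal{C}_{\ell}$ where $s=4\ell$ to obtain a group $H=D_{4t}\times\mathcal{C}_{\ell}$ with $|H'|=t$ and $[H:H']=4\ell=s$, and apply Proposition~\ref{prop1}.

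So the quaternion detour and the metabelian/wreath speculation are unnecessary: the same dihedral family already furnishes the required groups uniformly in $t$.
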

\begin{proof}
On our way to proving this result, we construct certain finite groups whose derived subgroups have prescribed order and index, for some special choices of the integers $s$ and $t$.\\
$i)$\ Let $t>1$ an integer such that there exists a perfect group $G$ of order $t$. Consider the group $H=G\times\mathcal{C}_{s}$ given by the direct product of $G$ with the cyclic group $\mathcal{C}_{s}$ of order $s$. Then, since by construction $|H|=ts$ and $H'=G'\times(\mathcal{C}_{s})'=G\times\{0\}$, we have $|H'|=t$ e $|H/H'|=s$. By Proposition~\ref{prop1}, $C$ is an abelian group code.\\
$ii)$\ Let $t$ an odd integer and consider the dihedral group $D_{2t}=\langle x,y \ | \ x^{2}=1=y^{t}, yxy=x\rangle$. Since $(D_{2t})'=\langle y^{2}\rangle=\langle y\rangle$, for every integer $r\geq1$ the group $H=D_{2t}\times\mathcal{C}_{r}$ is such that $|H|=2tr$, $|H'|=|\langle y\rangle\times\{0\}|=t$ and $[H:H']=2r=:s$. Again, we can conclude by Proposition~\ref{prop1}.\\
$iii)$\ In a similar fashion as in $ii)$, considering the fact that for every even integer $k=2t$ we have $(D_{2(2t)})'=\langle y^{2}\rangle$ is a proper subgroup of $\langle y\rangle$, hence $|(D_{2(2t)})'|=t$ and $[D_{2(2t)}:(D_{2(2t)})']=4$.
\end{proof}

A special class of finite groups, containing properly the class of dihedral groups, can be used to extend the reasoning used in $ii)$ and $iii)$ of Proposition~\ref{prop:1section2}. \\
Let $p$ and $q$ prime numbers such that $p\mid q-1$ and $m\in\N$ such that $m\not\equiv_{q}1$ and $m^{p}\equiv_{q}1$. Consider the finite group $G_{p,q,m}$ given by:
\[
G_{p,q,m}:=\langle\alpha,\beta\ |\ \alpha^{p}=1=\beta^{q}, \alpha\beta\alpha^{-1}=\beta^{m}\rangle.
\]
First of all, we recall that the condition $p\mid q-1$ implies the existence of a such $m$, as $\mathbb{F}_{q}^{\ast}$ is a cyclic group of order $q-1$. Moreover, $G_{p,q,m}=\langle\beta\rangle\rtimes\langle\alpha\rangle\cong\mathcal{C}_{q}\rtimes\mathcal{C}_{p}$ is a non abelian split-metacyclic group: indeed $\langle\beta\rangle\unlhd G_{p,q,m}$ is a normal cyclic subgroup of order $q$, that is the unique $q$-Sylow subgroup of $G_{p,q,m}$, such that $G_{p,q,m}/\langle\beta\rangle\cong\langle\alpha\rangle\cong\mathcal{C}_{p}$. In particular, being $G_{p,q,m}$ a non abelian group and $G_{p,q,m}/\langle\beta\rangle$ a cyclic group, it follows that $(G_{p,q,m})'=\langle\beta\rangle$, hence $|(G_{p,q,m})'|=q$ and $[G_{p,q,m}:(G_{p,q,m})']=p$.

\begin{prop}
Let $s=p_{1}^{\gamma_{1}}\cdots p_{r}^{\gamma_{r}}$ be a positive integer, with $p_{i}$ distinct primes and $\gamma_{i}\in\N\setminus\{0\}$. Let $t\in\N_{>1}$ be an integer, where $t=q_{1}^{\delta_{1}}\cdots q_{l}^{\delta_{l}}$ with $l\leq r$, $q_{i}$ distinct primes such that, up to a suitable renumbering, $p_{i}\mid q_{i}-1$ and $\delta_{i}\in\N\setminus\{0\}$ such that $\delta_{i}\leq\gamma_{i}$, for every $i=1,\ldots,l$. Then the linear code $C=\bigoplus_{i=1}^{s}Rep_{t}(\mathbb{F}_{q})$ is an abelian group code.
\end{prop}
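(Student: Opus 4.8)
The plan is to reduce the statement to Proposition~\ref{prop1}: it suffices to exhibit a finite group $G$ of order $s\cdot t$ whose derived subgroup $G'$ has order exactly $t$ (so that $[G:G']=s$), since then Proposition~\ref{prop1} immediately gives that $C=\bigoplus_{i=1}^{s}Rep_{t}(\mathbb{F}_{q})$ is an abelian group code. Thus the whole argument is a group-theoretic construction, assembled prime by prime from the metacyclic groups $G_{p,q,m}$ introduced above.

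I would rely on two facts. First, as recorded just before the statement, $(G_{p,q,m})'=\langle\beta\rangle$ has order $q$ and index $p$, and the existence of a legitimate $m$ is guaranteed by $p\mid q-1$. Second, for arbitrary finite groups $A$ and $B$ one has $(A\times B)'=A'\times B'$, hence $(A_{1}\times\cdots\times A_{k})'=A_{1}'\times\cdots\times A_{k}'$. Now, for each index $i\in\{1,\ldots,l\}$ choose a suitable $m_{i}$ with $p_{i}\mid q_{i}-1$ and set
\[
H_{i}:=\underbrace{G_{p_{i},q_{i},m_{i}}\times\cdots\times G_{p_{i},q_{i},m_{i}}}_{\delta_{i}\ \text{copies}}\times\,\mathcal{C}_{p_{i}^{\gamma_{i}-\delta_{i}}},
\]
which makes sense precisely because $\delta_{i}\leq\gamma_{i}$. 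Then $|H_{i}|=(p_{i}q_{i})^{\delta_{i}}\cdot p_{i}^{\gamma_{i}-\delta_{i}}=p_{i}^{\gamma_{i}}q_{i}^{\delta_{i}}$, and by the two facts above $H_{i}'=\langle\beta\rangle^{\delta_{i}}\times\{1\}$ has order $q_{i}^{\delta_{i}}$, so $[H_{i}:H_{i}']=p_{i}^{\gamma_{i}}$. For the remaining indices $i\in\{l+1,\ldots,r\}$ I would simply take the abelian group $H_{i}:=\mathcal{C}_{p_{i}^{\gamma_{i}}}$, for which $H_{i}'=\{1\}$.

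Finally, put $G:=H_{1}\times\cdots\times H_{r}$. A direct computation gives $|G|=\prod_{i=1}^{l}p_{i}^{\gamma_{i}}q_{i}^{\delta_{i}}\cdot\prod_{i=l+1}^{r}p_{i}^{\gamma_{i}}=\bigl(\prod_{i=1}^{r}p_{i}^{\gamma_{i}}\bigr)\bigl(\prod_{i=1}^{l}q_{i}^{\delta_{i}}\bigr)=s\cdot t$, while iterating $(A\times B)'=A'\times B'$ yields $G'=H_{1}'\times\cdots\times H_{r}'$, whence $|G'|=\prod_{i=1}^{l}q_{i}^{\delta_{i}}=t$ and therefore $[G:G']=s$. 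Applying Proposition~\ref{prop1} to this $G$ completes the proof.

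I do not expect a genuine obstacle: the construction is explicit, and everything rests on the multiplicativity of orders and of derived subgroups over direct products together with the already-computed value $(G_{p,q,m})'=\langle\beta\rangle$. The only point deserving care is the simultaneous bookkeeping that shows $|G|=st$ \emph{and} $|G'|=t$, and the observation that the hypothesis $\delta_{i}\leq\gamma_{i}$ is exactly what makes the cyclic padding factors $\mathcal{C}_{p_{i}^{\gamma_{i}-\delta_{i}}}$ legitimate. (One could even drop this hypothesis by replacing $H_{i}$ with a semidirect product $\mathcal{C}_{q_{i}^{\delta_{i}}}\rtimes\mathcal{C}_{p_{i}^{\gamma_{i}}}$ built from an automorphism of $\mathcal{C}_{q_{i}^{\delta_{i}}}$ of order $p_{i}$, which exists since $p_{i}\mid q_{i}-1$ forces $p_{i}\neq q_{i}$ and $p_{i}\mid|\mathrm{Aut}(\mathcal{C}_{q_{i}^{\delta_{i}}})|$; but this refinement is not needed for the statement as phrased.)
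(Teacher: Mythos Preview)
Your proposal is correct and follows essentially the same approach as the paper: both build $G$ as a direct product of $\delta_i$ copies of $G_{p_i,q_i,m_i}$ padded by a cyclic factor of order $p_i^{\gamma_i-\delta_i}$ for each $i\leq l$, then adjoin an abelian factor carrying the remaining primes $p_{l+1},\ldots,p_r$, and conclude via Proposition~\ref{prop1}. The only cosmetic difference is that the paper packages the extra primes into a single cyclic group $\mathcal{C}$ of order $p_{l+1}^{\gamma_{l+1}}\cdots p_r^{\gamma_r}$, whereas you use the isomorphic product $\mathcal{C}_{p_{l+1}^{\gamma_{l+1}}}\times\cdots\times\mathcal{C}_{p_r^{\gamma_r}}$.
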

\begin{proof}
Fix an index $i\in\{1,\ldots,l\}$ and an integer $m_{i}$ such that $m_{i}\not\equiv_{q_{i}}1$ e $m_{i}^{p_{i}}\equiv_{q_{i}}1$. Consider the group $G_{p_{i},q_{i},m_{i}}=\langle\alpha_{i},\beta_{i}\ |\ \alpha_{i}^{p_{i}}=1=\beta_{i}^{q_{i}}, \alpha_{i}\beta_{i}\alpha_{i}^{-1}=\beta_{i}^{m_{i}}\rangle=\langle\beta_{i}\rangle\rtimes\langle\alpha_{i}\rangle$, where $|(G_{p_{i},q_{i},m_{i}})'|=q_{i}$ and $[G_{p_{i},q_{i},m_{i}}:(G_{p_{i},q_{i},m_{i}})']=p_{i}$. We are going to construct the finite group $G_{i}:=(G_{p_{i},q_{i},m_{i}}\times\cdots\times G_{p_{i},q_{i},m_{i}})\times\mathcal{C}_{p_{i}^{(\gamma_{i}-\delta_{i})}}$ given by the direct product of $\delta_{i}$ copies of the group $G_{p_{i},q_{i},m_{i}}$ with the cyclic group $\mathcal{C}_{p_{i}^{(\gamma_{i}-\delta_{i})}}$ of order $p_{i}^{(\gamma_{i}-\delta_{i})}$. Thus, by construction, $|G_{i}'|=q_{i}^{\delta_{i}}$ and $[G_{i}:G_{i}']=p_{i}^{\delta_{i}}p_{i}^{(\gamma_{i}-\delta_{i})}=p_{i}^{\gamma_{i}}$. Providing this construction for the finite group $G_{i}$ for every index $i\in\{1,\ldots,l\}$, we are now in position to consider the finite group $G:=(G_{1}\times\cdots\times G_{l})\times\mathcal{C}$, where $\mathcal{C}$ is the cyclic group of order $p_{l+1}^{\gamma_{l+1}}\cdots p_{r}^{\gamma_{r}}$. Finally, as $|G'|=t$ and $[G:G']=s$, the linear code $C$ is an abelian group code by Proposition~\ref{prop1}.
\end{proof}
We may observe that Dirichlet's Theorem on arithmetic progressions for prime numbers (See for instance \cite{S49}) ensures that, for a fixed positive integer $s$, there exist many infinite integers $t$ such that the linear code $C=\bigoplus_{i=1}^{s}Rep_{t}(\mathbb{F}_{q})$ is an abelian group code: indeed, for every index $i\in\{1,\ldots,l\}$, there exist many infinite prime numbers $q_{i}$ such that $p_{i}\mid q_{i}-1$.\\

\section{An Equivalent Characterisation of Cyclic Group Codes}\label{Section3}
In this section we give an equivalent description of cyclic group codes, adopting a similar approach used in  \cite{PGMMM19} by Pillado et al. for abelian group codes. In particular, our main result is based on the following Theorem~\ref{thm:ccdGccgc}, which appears as special case of Theorem~\ref{thm:bdsad}. Moreover, we will see that not only does Theorem~\ref{thm:ccdGccgc} give rise to new several examples of cyclic group codes, but also gives rise to a different way of describing cyclic group codes, as stated in Theorem~\ref{cor:nsccgc3}.\\
In this regard, we recall that a finite group $G$ has a {\it coprime cyclic decomposition} if there exist cyclic subgroups $A,B\leq G$ of coprime order such that $G=AB$.

\begin{thm}\label{thm:ccdGccgc}
Let $C\le\mathbb{F}_{q}^{n}$ be a linear code. Suppose there exists a regular subgroup $G$ of $S_{n}$ having a coprime cyclic decomposition and such that $\langle G,C_{S_{n}}(G)\rangle\leq PAut(C)$. Then, $C$ is a cyclic group code.
\end{thm}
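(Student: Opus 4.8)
The plan is to mimic the proof of Theorem~\ref{thm:bdsad} almost verbatim, replacing "abelian subgroup" by "cyclic subgroup" and adding a coprimality bookkeeping step at the end. Concretely, let $A, B \leq G$ be cyclic subgroups of coprime order with $G = AB$, and let $\sigma\colon G \to C_{S_n}(G)$ be the anti-isomorphism from Lemma~\ref{lem:ai} relative to some fixed $i_0$. Set $B_1 = \sigma(B)$ and form $K = \langle A, B_1\rangle \leq \langle G, C_{S_n}(G)\rangle \leq PAut(C)$. Since $A \leq G$ and $B_1 \leq C_{S_n}(G)$ commute elementwise, and both $A$ and $B_1$ are cyclic, $K = AB_1$ is an abelian group generated by (at most) two cyclic subgroups.

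First I would reprove that $|K| = n$ and that the stabilizer in $K$ of every point of $\{1,\dots,n\}$ is trivial — these two arguments are literally the ones in the proof of Theorem~\ref{thm:bdsad} and use nothing beyond $G = AB$ and Lemma~\ref{lem:ai}, so $K$ is a regular subgroup of $S_n$ contained in $PAut(C)$. By Theorem~\ref{thm:nscagc}(i) (indeed by the left version, which is all we need) $C$ is a left $K$-code. The remaining and genuinely new point is that $K$ is in fact \emph{cyclic}: we have $|A| = |B| = |B_1|$ coprime, and $A \cap B_1 = \{1\}$ because any element of the intersection lies in $C_{S_n}(G) \cap G = Z(G) \supseteq$\dots wait — more directly, $A \cap B_1$ is a subgroup of both $A$ and $B_1$, hence its order divides $\gcd(|A|,|B_1|) = 1$. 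Therefore $K = AB_1 \cong A \times B_1 \cong \mathcal{C}_{|A|} \times \mathcal{C}_{|B_1|} \cong \mathcal{C}_{|A|\cdot|B_1|}$ is cyclic of order $n$. Thus $C$ is a cyclic group code.

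I expect the main (minor) obstacle to be purely expository: making sure the $|K| = n$ computation from Theorem~\ref{thm:bdsad} really does go through here, since that computation was phrased using $Z(G)$-intersections and the equalities $[B : B\cap Z(G)] = [B_1 : B_1 \cap Z(C_{S_n}(G))]$ etc., all of which hold because $\sigma$ is an anti-isomorphism fixing $Z(G)$ pointwise (Lemma~\ref{lem:ai}); none of this used abelianness of $A$ or $B$ beyond what is needed to conclude $K$ is abelian, so it transfers. Similarly the point-stabilizer argument only used $G = AB$ and $A \cap B \leq Z(G)$, the latter being automatic once we know $a b = 1_G$ forces $b^{-1} = a \in A \cap B$ and $A \cap B$ has order dividing $\gcd(|A|,|B|) = 1$, so in fact $A \cap B = \{1\}$ and the argument even simplifies. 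Hence the whole proof reduces to: copy the proof of Theorem~\ref{thm:bdsad}, observe $A \cap B_1 = \{1\}$ by coprimality, and conclude $K$ is cyclic.
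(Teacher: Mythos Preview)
Your proposal is correct and follows essentially the same route as the paper: define $K=\langle A,B_1\rangle$ with $B_1=\sigma(B)$, observe that $K$ is abelian because $A\leq G$ and $B_1\leq C_{S_n}(G)$, use coprimality of $|A|$ and $|B_1|$ to get $A\cap B_1=\{1\}$ and hence $K\cong A\times B_1$ cyclic, and defer the regularity of $K$ to the argument already given for Theorem~\ref{thm:bdsad}. Your remark that $A\cap B=\{1\}$ by coprimality (so the stabilizer computation even simplifies) is a correct extra observation that the paper does not make explicit; apart from the typo ``$|A|=|B|=|B_1|$ coprime'' (you mean $|A|$ and $|B|=|B_1|$ are coprime), nothing needs to change.
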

\begin{proof}
The proof works exactly in the same way as for Theorem~\ref{thm:bdsad}. Suppose $G=AB$, where $A$ and $B$ are cyclic subgroups of $G$ of coprime order and let $\sigma\colon G\to C_{S_{n}}(G)$ be the anti-isomorphism as in Lemma~\ref{lem:ai}, with respect to some $i_{0}\in\{1,\ldots,n\}$. Set $A_{1}=\sigma(A)$ and $B_{1}=\sigma(B)$ and consider the group $K=\langle A,B_{1}\rangle\leq\langle G,C_{S_{n}}(G)\rangle\leq PAut(C)$. To prove the theorem it suffices to show that $K$ is a cyclic group. Indeed, since $A$ and $B_{1}$ are cyclic subgroups such that $A\leq G$ and $B_{1}\leq C_{S_{n}}(G)$, then $K$ is an abelian group. Furthermore, having $A$ and $B_{1}$ coprime order, and so $A\cap B_{1}=\{1\}$, then $K\cong A\times B_{1}$ and hence $K$ is a cyclic group of order $|A||B_{1}|$.\\
We conclude by noting that the proof of regularity of $K$ as a subgroup of $S_{n}$ works in the same manner as for Theorem~\ref{thm:bdsad}.
\end{proof}

Before explaining the following remark, we recall that a group $G$ is metacyclic if there exist cyclic subgroups $K,S\leq G$, with $K$ normal in $G$, such that $G=SK$; in the special case in which $S\cap K=\{1\}$, then $G=K\rtimes S$ is said to be split-metacyclic. 

\begin{rmk}
An immediate consequence of Theorem~\ref{thm:ccdGccgc} concerns the determination of a wide class of group codes which can be realised as cyclic group codes. In \cite{SL95} Sabin and Lomonaco have proved that if $C$ is a $H$-code for a split-metacyclic group $H$ then $C$ is an abelian group code. In \cite{BRS09}, Bernal, Del Río and Simon have extended this result to arbitrary metacyclic group codes. In particular, we are going to refine these results showing that under some specified conditions on the split-metacyclic group $H$ then $C$ can be realised as a cyclic group code. Thus, suppose $H=SK$ is a metacyclic group with $S\leq H$ and $K\unlhd H$ cyclic subgroups of coprime order. If $C\leq\mathbb{F}_{q}^{n}$ is a linear code satisfying hypothesis of Theorem~\ref{thm:ccdGccgc} with $G\cong H$, then $C$ is a cyclic group code. As a direct consequence of this observation, we may observe that every $D_{2m}$-dihedral group code is a cyclic group code, whenever $m$ is an odd integer.
\end{rmk}

One of the most significant consequence of Theorem~\ref{thm:ccdGccgc} concerns the following equivalent characterisation of cyclic group codes, Theorem~\ref{cor:nsccgc3}. In a similar fashion as for abelian group codes \cite{PGMMM19}, we need the following lemma, which appears as a special case of Lemma~\ref{lem:p2}. In particular, we recall that, given a group $G$, a non-trivial normal subgroup $N\unlhd G$ is said to be a {\it co-cyclic subgroup} of $G$, if the quotient group $G/N$ is cyclic.

\begin{lem}\label{lem:Hnccspei}
Let $G$ be a group and suppose that an Hall normal subgroup $N\unlhd G$ acts trivially from the left (from the right) on some (left, right, two-sided) ideal $I$ of the group algebra $\mathbb{F}_{q}[G]$ and that $G/N$ has a coprime cyclic decomposition. Then $I$ is permutation equivalent to an ideal of a group algebra $\mathbb{F}_{q}[\mathcal{C}]$ for some cyclic group $\mathcal{C}$.
\end{lem}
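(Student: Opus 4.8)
The plan is to deduce Lemma~\ref{lem:Hnccspei} from Lemma~\ref{lem:p2} by a reduction argument, in two stages: first pass from $\mathbb{F}_{q}[G]$ to an abelian group algebra, and then refine this abelian group algebra to a \emph{cyclic} one by exploiting the coprimality and Hall hypotheses. Since $G/N$ has a coprime cyclic decomposition, in particular it has an abelian decomposition, so Lemma~\ref{lem:p2} applies directly and gives that $I$ is permutation equivalent to an ideal $J$ of $\mathbb{F}_{q}[A]$ for some abelian group $A$ with $|A|=|G|=n$. The task is then to show that $J$ is in fact permutation equivalent to an ideal of a cyclic group algebra.

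The key is to track which normal subgroup of $A$ acts trivially on $J$, and to use the Hall hypothesis on $N$ to guarantee that the quotient is cyclic. First I would unwind the proof of Lemma~\ref{lem:p2} (equivalently of Theorem~\ref{thm:bdsad} combined with Lemma~\ref{lem:p1}): writing $\bar G=G/N$ with coprime cyclic decomposition $\bar G=\bar A\bar B$, one builds $A$ as $\langle \bar A, \bar B_{1}\rangle$ inside $\mathrm{PAut}$ of the realising group (as in Theorem~\ref{thm:ccdGccgc}), and one finds that $A$ contains an isomorphic copy $\bar N$ of $N$ which acts trivially on $J$ and with $A/\bar N\cong \bar G$. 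Now $\bar G$ is cyclic-by-coprime-cyclic; more precisely, since $\bar A$ and $\bar B$ have coprime orders, $\bar G=\bar A\times\bar B$ is cyclic of order $|\bar A||\bar B|$. Therefore $A/\bar N$ is cyclic. Finally, $N$ being a Hall subgroup of $G$ means $\gcd(|N|,|G/N|)=1$, i.e.\ $\gcd(|\bar N|,|A/\bar N|)=1$, so by the Schur--Zassenhaus theorem (for the abelian group $A$ this is just the statement that $A$ splits as a direct product of its Hall subgroups) we get $A\cong \bar N\times (A/\bar N)$, and in particular $A$ has a cyclic Hall subgroup $\mathcal{C}$ with $\mathcal{C}\cong A/\bar N\cong\bar G$ and $|\mathcal{C}|=[A:\bar N]$.

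With this in hand I would invoke Lemma~\ref{lem:p1} one more time: we have the normal subgroup $\bar N\unlhd A$ acting trivially on $J$, and $A/\bar N\cong\mathcal{C}/\{1\}$ where $\mathcal{C}$ is cyclic of order $[A:\bar N]=|\bar N|\cdot$... — more cleanly, set $H=\mathcal{C}$ extended trivially, i.e.\ apply Lemma~\ref{lem:p1} with the pair of normal subgroups $\bar N\unlhd A$ and, say, a complement inside $\mathcal{C}\times\bar N$... The cleanest route is: take $H=\bar N\times \mathcal{C}$ with $\mathcal{C}$ cyclic of the appropriate order — but that has order $n^{2}/[A:\bar N]\neq n$. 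Instead, since $A\cong\bar N\times\mathcal{C}$ already, $A$ itself need not be cyclic, so I directly apply Lemma~\ref{lem:p1} with $G\rightsquigarrow A$, $N\rightsquigarrow \bar N$, $H\rightsquigarrow \mathcal{C}'$ where $\mathcal{C}'$ is a cyclic group of order $n$ admitting a normal subgroup $K$ of order $|\bar N|$ with $\mathcal{C}'/K\cong A/\bar N$: such a $\mathcal{C}'$ exists precisely because $A/\bar N$ is cyclic and $\gcd(|\bar N|,|A/\bar N|)=1$, whence $\mathcal{C}'=\mathcal{C}_{|\bar N|}\times (A/\bar N)$ is cyclic of order $n$. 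Lemma~\ref{lem:p1} then yields that $J$, and hence $I$, is permutation equivalent to an ideal of $\mathbb{F}_{q}[\mathcal{C}']$, completing the proof.

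I expect the main obstacle to be bookkeeping rather than conceptual: one must be careful that the normal subgroup which acts trivially after the first reduction (Lemma~\ref{lem:p2}) really is an isomorphic copy of $N$, so that its Hall property in $G$ transfers to the new ambient abelian group $A$, and that the quotient really is $\bar G$ and not some other group of the same order. In practice this means I would either reprove the reduction in a form that records the quotient (mirroring the explicit construction in the proof of Theorem~\ref{thm:bdsad}, exactly as the authors do for Theorem~\ref{thm:ccdGccgc}), or state Lemma~\ref{lem:p2} in the sharper form "permutation equivalent to an ideal of $\mathbb{F}_{q}[A]$ with $A/\bar N\cong G/N$ for a copy $\bar N\le A$ of $N$", which is what its proof actually gives. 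Once the quotient and the Hall/coprimality conditions are correctly propagated, the final step is the elementary observation that a finite abelian group that is an extension of a cyclic group by a Hall (hence complemented) subgroup is a direct product in which one factor is cyclic, and that appending that cyclic factor to a cyclic group of coprime order keeps it cyclic.
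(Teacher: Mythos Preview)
There is a genuine error in your argument. You claim that since $\bar G=G/N$ has a coprime cyclic decomposition $\bar G=\bar A\bar B$, it follows that $\bar G=\bar A\times\bar B$ is cyclic. This is false: a product $AB$ of cyclic subgroups of coprime order need not be a direct product, and the group need not even be abelian. The symmetric group $S_{3}=\langle(123)\rangle\langle(12)\rangle$ is a coprime cyclic decomposition, yet $S_{3}$ is not cyclic. Consequently your assertion that $A/\bar N\cong\bar G$ is cyclic collapses, and with it the final application of Lemma~\ref{lem:p1}. (Indeed, the tracking claim $A/\bar N\cong\bar G$ is already suspect: $A$ is abelian by construction, so its quotients are abelian, whereas $\bar G$ may not be.)

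The paper avoids this entirely by bypassing Lemma~\ref{lem:p2} and applying Lemma~\ref{lem:p1} \emph{once}, directly, with the target group $H=(G/N)\times\mathcal{C}_{|N|}$. The Hall hypothesis guarantees $\gcd(|N|,|G/N|)=1$, so if $G/N=\bar A\bar B$ is a coprime cyclic decomposition then $H=(\bar A\times\mathcal{C}_{|N|})\cdot\bar B$ is again a coprime cyclic decomposition (the first factor is cyclic because the orders are coprime). Now Theorem~\ref{thm:ccdGccgc} applies to $H$ and gives a cyclic group code in one step. In short, rather than first passing to an abelian group algebra and then trying to upgrade to a cyclic one, you should choose the intermediate group $H$ in Lemma~\ref{lem:p1} so that it already has a coprime cyclic decomposition, and finish with Theorem~\ref{thm:ccdGccgc}.
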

\begin{proof}
Consider the group $H=(G/N)\times \mathcal{C}_{|N|}$, where $\mathcal{C}_{|N|}$ is a cyclic group of order $|N|$. Thus, $H$ is a finite group of order $n$ such that $\mathcal{C}_{|N|}\unlhd H$ and $H/\mathcal{C}_{|N|}\cong G/N$. By Lemma~\ref{lem:p1} $I$ is permutation equivalent to some ideal $J$ of $\mathbb{F}_{q}[H]$. Finally, as $G/N$ has a coprime cyclic decomposition and $gcd(|N|,|G/N|)=1$, $H$ as a coprime cyclic decomposition too, and, by Theorem~\ref{thm:ccdGccgc}, we can conclude.
\end{proof}

Finally, we are now in position to prove Theorem~\ref{cor:nsccgc3}.
\begin{proof}[Proof of Theorem~\ref{cor:nsccgc3}]
By Theorem~\ref{thm:nscagc}, $C$ is permutation equivalent to some ideal of the group algebra $\mathbb{F}_{q}[G]$ on which $G'$ acts trivially from the left (from the right). Finally, by Lemma~\ref{lem:Hnccspei} and hypothesis on the derived subgroup $G'$, we get the thesis.
\end{proof}

\subsection*{Acknowledgements} This note has been realized from the author’s Mater's degree thesis which was undertaken at Università degli Studi di Milano-Bicocca, and written under the patient supervision of Francesca Dalla Volta, to whom special thanks for her support and suggestions are due.

\end{document}